\documentclass{article}
\usepackage{amsmath, amssymb,amsthm,mathrsfs, graphicx}
\usepackage{url}

\begin{document}

\title{
A One-Factor Conditionally Linear 
Commodity Pricing Model under Partial Information\thanks{
This paper is based on the third author's master thesis \cite{YM}.}
}
\author{
Takashi Kato\thanks{
Division of Mathematical Science for Social Systems,
%Department of Systems Innovation,
Graduate School of Engineering Science,
Osaka University. \ 
E-mail: \texttt{kato@sigmath.es.osaka-u.ac.jp}}
\and
Jun Sekine\thanks{
Division of Mathematical Science for Social Systems,
%Department of Systems Innovation,
Graduate School of Engineering Science,
Osaka University. \ 
E-mail: \texttt{sekine@sigmath.es.osaka-u.ac.jp}}
\and
Hiromitsu Yamamoto
\thanks{
Division of Mathematical Science for Social Systems,
%Department of Systems Innovation,
Graduate School of Engineering Science,
Osaka University. \ 
E-mail: \texttt{hiromitsu.yamamoto@gmail.com}}
%\thanks{Sumitomo Life Insurance Company.}
}
\date{}
\maketitle

\begin{abstract}
A one-factor asset pricing model with
an Ornstein--Uhlenbeck process as its state variable 
is studied under partial information:
the mean-reverting level and the mean-reverting speed parameters 
are modeled as hidden/unobservable stochastic variables.
No-arbitrage pricing formulas
for derivative securities written on a liquid asset 
and  exponential utility indifference pricing formulas
for derivative securities written on an illiquid asset are presented.
Moreover, a conditionally linear filtering result 
is introduced to compute the pricing/hedging formulas
and the Bayesian estimators of the hidden variables. 

\medskip

\noindent{\bf Keywords:}
Commodity futures/forward,
Conditionally linear model, 
Partial information,
Stochastic convenience yield,
Utility indifference pricing.

\medskip

\noindent{\bf Mathematics Subject Classification (2010): }
91G20, 60J70, 93C41

\noindent{\bf Journal of Economic Literature (JEL) Classifications: } 
G13, C63

\end{abstract}

\newtheorem{lem}{Lemma}[section]
\newtheorem{prop}{Proposition}[section]
\newtheorem{thm}{Theorem}[section]
\newtheorem{cor}{Corollary}[section]
\newtheorem{form}{Formula}[section]
\newtheorem{ass}{Assumption}[section]
\newtheorem{conj}{Conjecture}[section]
\theoremstyle{definition}
\newtheorem{defi}{Definition}[section]
\newtheorem{nota}{Notation}[section]
\newtheorem{cond}{Condition}[section]
\newtheorem{prob}{Problem}[section]
\newtheorem{ex}{Example}[section]
\newtheorem{exer}{Exercise}[section]
\newtheorem{que}{Question}[section]
\newtheorem{rem}{Remark}[section]
\numberwithin{equation}{section}

\section{Introduction}

Employing an Ornstein--Uhlenbeck process as the state variable 
is a simple and tractable way to model a commodity price process.
For example, in Schwartz (1997) a one-factor model for 
a commodity spot price process $(S_t)_{t\ge 0}$ is considered,
\begin{equation}
S_t = {\mathrm e}^{Y_t},
\quad
dY_t = -k(Y_t -l)dt + \sigma dW_t,
\quad
Y_0=\log S_0\in {\mathbb R},
\end{equation}
where 
$k$,$l$, and $\sigma\in {\mathbb R}_{++}(:=(0,\infty))$ are constant parameters,
and $W:=(W_t)_{t\ge 0}$ is a $1$-dimensional Brownian motion
on a filtered probability space
$(\Omega,{\mathcal F},{\mathbb P}, ({\mathcal F}_t)_{t\ge 0})$.
The probability ${\mathbb P}$ may be regarded as
the physical (i.e., real-world) probability 
or as the pricing (i.e., risk-neutral) probability.  
Also, in Schwartz (1998) 
a one-factor model with time-dependent volatility, 
\begin{equation}
S_t = {\mathrm e}^{Y_t},
\quad
dY_t = 
\left\{  r-c -\frac{\sigma(t)^2}{2}\right\}dt + \sigma(t) dW_t,
\quad
Y_0=\log S_0\in {\mathbb R}
\end{equation}
is studied under risk-neutral probability, 
where 
$r\in {\mathbb R}_{+}(:=[0,\infty))$, 
\[
\begin{split}
 c:=&\alpha-\frac{\sigma_2^2}{2\kappa^2}
+\frac{\rho\sigma_1\sigma_2}{\kappa}, \\
 \sigma(t)
:=&\sqrt{
\sigma_1^2
+\left( \sigma_2^2 -2\rho\sigma_1\sigma_2\right)
\frac{1-{\mathrm e}^{\kappa t}}{\kappa}
}
\end{split}
\]
with $\alpha, \kappa, \sigma_1,\sigma_2\in {\mathbb R}_{++}$ 
and $\rho\in [-1,1]$.
Here, the constant $r$ is interpreted as the risk-free interest rate,
and the constant $c$ is interpreted as the convenience yield. 
Hence, the futures (or forward) price process 
$(F_t)_{t\in [0,T_1]}$, delivering at $T_1\in {\mathbb R}_{++}$, is given by
\[
 F_t:=S_t {\mathrm e}^{(r-c)(T_1-t)},
\]
or, equivalently,
\[
 dF_t = F_t \sigma(t)dW_t,
\quad
F_0= S_0 {\mathrm e}^{(r-c)T_1}.
\]

In the present paper, 
inspired by Carmona and Ludkovski (2006), 
we aim to treat one-factor models such as (1.1) and (1.2)
under partial information setting.
As in Carmona and Ludkovski (2006), 
where a general commodity forward price model
is treated under partial information,
our model has the following features:
\begin{itemize}
 \item[(i)]
A futures (or a forward) is regarded as a liquid asset.

 \item[(ii)] The spot is regarded as an illiquid asset, 
and so the convenience yield is regarded as a hidden 
stochastic variable.

 \item[(iii)] Because of (ii), 
the pricing and hedging of derivatives written on the spot
is regarded as an incomplete market problem 
that contains the hidden variable.
\end{itemize}
In particular, we are interested in a simple, 
specific, ``conditionally linear'' example 
that was not studied in Carmona and Ludkovski (2006).
Under the physical probability, 
the state-variable (of a spot/futures price process) 
is given by
\begin{equation}
 dY_t= \left\{ f(t)+\Theta_0 -\Theta_1 Y_t \right\}dt
+ \sigma(t)dW_t.
%\quad Y_0:=\log F_0\in {\mathbb R},
\end{equation}
Here, $f$ and $\sigma$ are deterministic functions,
and both the parameter $\Theta_0$, 
which is interpreted as the convenience yield, 
and the mean-reversion speed parameter $\Theta_1$
are {\it unobservable} (hidden) random variables
that are estimated dynamically in a ``Bayesian'' way.
The model has the following 
interesting tractability and flexibility characteristics: 
\begin{itemize}
 \item[(a)] 
For pricing derivatives written on a liquid futures,
Black--Scholes pricing formula is applied
(see Proposition 3.1 and Corollary 3.1). 
 \item[(b)]
For pricing derivatives written on an illiquid spot, 
closed-form formulas of indifference prices
are provided (see Proposition 3.2, 
Remark 4.3, and Proposition 5.1, followed by Remark 5.2). 
 \item[(c)]
Under the physical probability measure, 
the log-price process of a futures or spot
is a mean-reverting ``OU-like'' process 
with a stochastic mean-reverting level/speed.
Explicit formulas of
the Bayesian estimators (i.e., filters) 
of these parameters and the convenience yield
are provided (see Proposition 4.1 and Remark 4.2).

 \item[(d)]
Under the physical probability measure, 
the model can be non-Gaussian in nature
(see Proposition 6.1 and 6.2).
\end{itemize}

The organization of the present paper is as follows.
The model is introduced in the next section, and
the pricing and hedging of derivatives is discussed in Section 3.
In Section 4, we introduce a filtering result, with which 
the dynamics of prices (of both futures and spots)
and the convenience yields are described using physical probability.
In Section 5, 
we compute the trivariate probability density function 
of the three-dimensional {\it Markovian} state
in a ``semi-explicit'' form under risk-neutral probability, 
which is useful for pricing/hedging computations.
In Section 6,
we compute the cumulants of the marginal distribution 
of the logarithmic futures price under physical probability, 
and in Section 7, we conclude.
All necessary proofs are collected in the appendix.

\section{Model}

Let $(\Omega,{\mathcal F},{\mathbb P})$
be a complete probability space 
endowed with a one-dimensional Brownian motion 
$W:=(W_t)_{t\ge 0}$ 
and the two-dimensional random variable 
$\Theta:=(\Theta_0,\Theta_1)^\top$, where $\Theta$ is independent of $W$
($(\cdot)^\top$ denotes the transpose of a vector or a matrix).
Let $({\mathcal F}_t)_{t\ge 0}$ be the filtration which is defined by
\[
 {\mathcal F}_t:=
\sigma\left( W_u; \ u\in [0,t]\right)
\vee \sigma(\Theta)\vee {\mathcal N},
\]
where ${\mathcal N}$ is the totality of the null sets.
For $T_1, F_0\in {\mathbb R}_{++}$, 
consider the solution 
$(Y_t)_{t\in [0,T_1]}$ to the 
following stochastic differential equation:
\begin{equation}
 dY_t = \left\{ f(t) + \Theta_0 -\Theta_1 Y_t\right\}dt
+ \sigma(t)dW_t,
\quad 
Y_0:=\log F_0
\end{equation}
on $(\Omega,{\mathcal F},{\mathbb P}, ({\mathcal F}_t)_{t\in [0,T_1]})$,
where $f,\sigma:[0,T_1] \to {\mathbb R}$ 
are continuous functions, so that $\sigma(\cdot)\ge \epsilon$
with $\epsilon>0$.
We define $(F_t)_{t\in [0,T_1]}$ by
\[
 F_t:={\mathrm e}^{Y_t},
\]
and call this the $T_1$-delivering futures (or forward)
price process of a commodity. 
By It\^o's formula, we see that
\[
 dF_t =F_t 
\left\{ 
\mu(t,Y_t,\Theta) dt
+\sigma(t)dW_t
\right\}, 
\]
where we set
\[
\mu(t,y,\Theta):=f(t) + \frac{\sigma(t)^2}{2}
+ \Theta_0 -\Theta_1 y.
\]
Regarding $\Theta_0$ as the convenience yield, we define 
the spot price process $(S_t)_{t\in [0,T_1]}$ by
\begin{equation}
 S_t:= F_t {\mathrm e}^{-(r-\Theta_0)(T_1-t)}. 
\end{equation}
As mentioned in the introduction, we assume that
the futures $F$ is a liquid asset 
and that the spot $S$ is an illiquid asset.
An agent's information flow is given by
the filtration $({\mathcal F}^F_t)_{t\in [0,T_1]}$, where
\begin{align*}
 {\mathcal F}_t^F:=&
\sigma\left( F_u; \ u\in [0,t]\right)
\vee {\mathcal N} \\
=&\sigma\left( Y_u; \ u\in [0,t]\right)
\vee {\mathcal N},
\end{align*}
which is generated by the liquid futures price process.
So, the convenience yield $\Theta_0$
and the mean-reverting speed parameter $\Theta_1$ of $F$
are hidden random variables for the agent 
and are estimated via the information flow
$({\mathcal F}^F_t)_{t\in [0,T_1]}$.

We introduce a measure change, which we will use later.
Let 
\[
 \lambda(t,Y_t,\Theta)
:=\frac{\mu(t,Y_t,\Theta)}{\sigma(t)}
\]
be the market price of risk at time $t$. Using this, 
we define the probability measure $\tilde{\mathbb P}$ 
on $(\Omega,{\mathcal F}_{T_1})$ by
\[
\frac{d\tilde{\mathbb P}}{d{\mathbb P}}
\biggm|_{{\mathcal F}_t}
=Z_t(\Theta), 
\]
where
\[
Z_t(\Theta)
:=\exp\left\{
-\int_0^t \lambda(u,Y_u,\Theta)dW_u
-\frac{1}{2}
\int_0^t \lambda(u,Y_u,\Theta)^2 du
\right\}.
\]
This is actually well-defined 
because the martingale property of $(Z_t(\Theta))_{t\in [0,T]}$
follows from the linear-growth property of $\lambda(t,Y_t,\Theta)$ 
with respect to $Y_t$. See Lemma 4.1.1 of Bensoussan (1990), for example. 
By the Cameron-Martin-Maruyama-Girsanov theorem, 
the process
\[
 \tilde{W}_t:=W_t + \int_0^t \lambda(u,Y_u,\Theta)du,
\quad t\in [0,T_1],
\]
is a $(\tilde{\mathbb P}, {\mathcal F}_t)$-Brownian motion, 
and the $\tilde{\mathbb P}$-dynamics of $F$ is expressed as
\begin{equation}
dF_t =F_t \sigma(t)d\tilde{W}_t,
\quad F_0\in {\mathbb R}_{++}.
\end{equation}
Moreover, we see the following.
\begin{lem}
{\rm (1)}  It holds that
\[
 {\mathcal F}^F_t = 
\sigma\left( \tilde{W}_u; \ u\in [0,t]\right)
\vee {\mathcal N}, 
\quad
t\in [0,T_1].
\]
So, the process $\tilde{W}$ is also a 
$(\tilde{\mathbb P}, {\mathcal F}_t^F)$-Brownian motion.

\noindent{\rm (2)}
$\tilde{W}$ and $\Theta$ are independent under $\tilde{\mathbb P}$.
The law of $\Theta$ under $\tilde{\mathbb P}$
is equal to that under ${\mathbb P}$.
\end{lem}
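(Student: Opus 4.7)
The plan is to deduce part (1) directly from the $\tilde{\mathbb{P}}$-dynamics of $Y = \log F$, which realize $\tilde{W}$ as a pathwise functional of $Y$, and to establish part (2) by conditioning on $\Theta$ and applying Girsanov's theorem realization-by-realization, exploiting the ${\mathbb{P}}$-independence of $\Theta$ and $W$.

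For part (1), from (2.3) and It\^o's formula the $\tilde{\mathbb{P}}$-SDE for $Y$ is $dY_t = \sigma(t)\,d\tilde{W}_t - \tfrac{1}{2}\sigma(t)^2\,dt$. Since $\sigma$ is deterministic and bounded below by $\epsilon > 0$, I would solve for $\tilde{W}$ pathwise,
\[
\tilde{W}_t = \int_0^t \frac{1}{\sigma(s)}\, dY_s + \int_0^t \frac{\sigma(s)}{2}\, ds,
\]
which is $\mathcal{F}_t^F$-measurable. Conversely, $Y_t = Y_0 + \int_0^t \sigma(s)\, d\tilde{W}_s - \tfrac{1}{2}\int_0^t \sigma(s)^2\, ds$ is measurable with respect to $\sigma(\tilde{W}_u;\, u\le t)\vee\mathcal{N}$. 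This gives the filtration identity; the $(\tilde{\mathbb{P}},\mathcal{F}_t^F)$-Brownian motion property then follows at once, because $\tilde{W}$ is an $(\tilde{\mathbb{P}},\mathcal{F}_t)$-Brownian motion, $\mathcal{F}_t^F \subset \mathcal{F}_t$, and $\tilde{W}_t$ is already adapted to $\mathcal{F}_t^F$.

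For part (2), the central idea is that under ${\mathbb{P}}$ the random variable $\Theta$ is independent of $W$, so a regular conditional probability ${\mathbb{P}}(\cdot \mid \Theta = \theta)$ leaves $W$ a Brownian motion. For each fixed $\theta \in \mathbb{R}^2$, the martingale property of $(Z_t(\theta))_{t\in[0,T_1]}$ holds by the same Bensoussan-type argument cited earlier, so Girsanov applied to ${\mathbb{P}}(\cdot\mid\Theta=\theta)$ makes the conditional version of $\tilde{W}$ (i.e.\ $W + \int_0^\cdot \lambda(u, Y_u, \theta)\,du$) a standard Brownian motion under the tilted conditional measure, and its law there does not depend on $\theta$. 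Then for any bounded Borel $\phi$ on path space and bounded Borel $g : \mathbb{R}^2 \to \mathbb{R}$, the tower property yields
\[
\tilde{E}[\phi(\tilde{W})\, g(\Theta)] = E\!\left[g(\Theta)\, E\!\left[Z_{T_1}(\Theta)\phi(\tilde{W}) \,\big|\, \Theta\right]\right] = c_\phi\, E[g(\Theta)],
\]
where $c_\phi$ denotes the Wiener-measure expectation of $\phi$, independent of $\theta$. Taking $\phi \equiv 1$ gives $\tilde{E}[g(\Theta)] = E[g(\Theta)]$, which is the second assertion; substituting this back yields $\tilde{E}[\phi(\tilde{W}) g(\Theta)] = \tilde{E}[\phi(\tilde{W})]\,\tilde{E}[g(\Theta)]$, i.e.\ the independence of $\tilde{W}$ and $\Theta$ under $\tilde{\mathbb{P}}$.

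The main technical obstacle I anticipate is the measurable-selection step needed to pass from the random density $Z_{T_1}(\Theta)$ to the deterministic family $\{Z_{T_1}(\theta)\}$ and to apply Girsanov fiberwise. One has to verify that the martingale property holds for a.e.\ realization of $\Theta$ and that $\theta \mapsto \tilde{E}^\theta[\phi(\cdot)]$ is a Borel function equal to a constant; both follow because the linear-growth bound on $\lambda$ underlying the argument in Bensoussan (1990) is uniform in $\theta$ on compact sets, but this point should be stated explicitly rather than hidden in the tower-property computation.
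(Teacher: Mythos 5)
Your proof is correct, and part (1) is essentially identical to the paper's: both invert the $\tilde{\mathbb P}$-dynamics of $Y$ to write $\tilde{W}$ as a pathwise functional of $Y$ and vice versa (your sign $+\int_0^t \frac{\sigma(s)}{2}ds$ is in fact the right one; the paper's display has a harmless typo there). For part (2), however, you take a genuinely different and heavier route. The paper never conditions on $\Theta$: it simply observes that $\tilde{W}$ is already known to be a $(\tilde{\mathbb P},{\mathcal F}_t)$-Brownian motion for the \emph{full} filtration, so the entire path $(\tilde{W}_t)_{t\ge 0}$ is independent of ${\mathcal F}_0=\sigma(\Theta)\vee{\mathcal N}$, which gives the independence of $\tilde{W}$ and $\Theta$ in one line; and since $h(\Theta)$ is ${\mathcal F}_0$-measurable while the density of $\tilde{\mathbb P}$ restricted to ${\mathcal F}_0$ is $Z_0(\Theta)=1$, the law of $\Theta$ is unchanged, i.e.\ $\tilde{\mathbb E}[h(\Theta)]={\mathbb E}[Z_0(\Theta)h(\Theta)]={\mathbb E}[h(\Theta)]$. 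Your fiberwise-Girsanov argument (regular conditional probability ${\mathbb P}(\cdot\mid\Theta=\theta)$, martingale property of $Z(\theta)$ for each fixed $\theta$, the freezing lemma to compute ${\mathbb E}[Z_{T_1}(\Theta)\phi(\tilde{W})\mid\Theta]$) reaches the same factorization $\tilde{\mathbb E}[\phi(\tilde{W})g(\Theta)]=c_\phi\,{\mathbb E}[g(\Theta)]$ and is sound, since $Y$ is a strong solution so that $(Y,\tilde{W})$ is a jointly measurable functional of $(W,\Theta)$; but it pays for this with exactly the measurable-selection technicalities you flag, all of which the paper's argument sidesteps by exploiting that $\sigma(\Theta)$ sits inside ${\mathcal F}_0$ and that the time-$0$ density is trivial. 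What your approach buys in exchange is slightly more information: it identifies the conditional $\tilde{\mathbb P}$-law of $\tilde{W}$ given $\Theta$ explicitly as Wiener measure, rather than only the unconditional independence and marginal statements.
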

\begin{proof}
See Appendix.
\end{proof}

\section{Pricing and Hedging of Derivatives}

Let $T\in {\mathbb R}_{++}$ ($T\le T_1$) be a time horizon.
Consider an agent 
who dynamically trades the futures $F$ in a self-financing way. 
The cumulative gain process 
$(G_t(\pi))_{t\in [0,T]}$ of the agent 
is given by
\begin{equation}
 dG_t(\pi)
=\pi_t \frac{dF_t}{F_t}
+r G_t(\pi) dt,
\quad
G_0(\pi)=0.
\end{equation}
Here, 
$r\in {\mathbb R}_{+}$ is the risk-free interest rate and
$\pi:=(\pi_t)_{t\in [0,T]}$ is a dynamic trading strategy 
where $\pi_t$ represents the amount of money
invested in the futures at time $t$
(see Section 4.5 of Duffie and Richardson, 1991, for example). 
Let
\[
 {\mathscr A}_T:=\left\{
(p_t)_{t\in [0,T]} ; \
\text{${\mathcal F}_t^F$-progressively measurable, and }
\tilde{\mathbb E} \int_0^T p_t^2 dt<\infty
\right\}
\]
be the space of admissible investment strategies.
Combining (3.1) with (2.3), 
we see that, for $\pi\in {\mathscr A}_T$, 
\[
 G_t(\pi)=
{\mathrm e}^{rt}
\int_0^t {\mathrm e}^{-rs} \pi_s \sigma(s)d\tilde{W}_s,
\quad
t\in [0,T],
\]
and that
\[
\tilde{G}_t(\pi):={\mathrm e}^{-rt} G_t(\pi)
=\int_0^t {\mathrm e}^{-rs} \pi_s \sigma(s)d\tilde{W}_s,
\quad
t\in [0,T]
\]
is a $(\tilde{\mathbb P}, {\mathcal F}^F_t)$-martingale.

\subsection{Derivatives on Futures}

Consider the derivative security 
whose payoff at the maturity date $T\in {\mathbb R}_{++}$ 
is given by
\begin{equation}
 H\in L^2(\tilde{\mathbb P}, {\mathcal F}^F_T).
\end{equation}
Noting Lemma 2.1, 
we apply the Brownian martingale representation theorem
to see that 
there exists $\pi^H\in {\mathscr A}_T$ such that
\begin{equation}
\tilde{\mathbb E}[{\mathrm e}^{-rT}H|{\mathcal F}^F_t]
=\tilde{\mathbb E}[{\mathrm e}^{-rT}H]
+ \tilde{G}_t\left( \pi^H\right)
\quad 
t\in [0,T],
\end{equation}
where $\tilde{\mathbb E}[\cdot]$ and 
$\tilde{\mathbb E}[ \cdot | \cdot]$ denote
expectation and conditional expectation, respectively, for $\tilde{\mathbb P}$; thus, we can find the pair 
$(x^H,\pi^H)\in {\mathbb R}\times {\mathscr A}_T$ such that
\begin{equation}
 H= {\mathrm e}^{rT} x^H + G_T\left( \pi^H\right) .
%\quad\text{a.s.}
\end{equation}
Here, the first term of the right-hand side of (3.4) is 
the $T$-value of the initial replication cost $x^H$
with continuously compounded interest rate $r$, 
and the second term of the right-hand side of (3.4) is
the cumulative gain of the futures trading up-to time $T$. 
We then apply a standard argument of no-arbitrage pricing
theory regarding complete markets to obtain the following.
\begin{prop} 
For the derivative security (3.2) maturing at $T$, 
the following assertions are valid. 
\begin{itemize}
 \item[\rm (1)] 
$\tilde{\mathbb E}[{\mathrm e}^{-r(T-t)}H|{\mathcal F}^F_t]$
is the no-arbitrage price of the derivative security at time 
$t\in [0,T]$. 

 \item[\rm (2)]
The initial cost $x^H:=\tilde{\mathbb E}[{\mathrm e}^{-rT}H]$
and the trading strategy $\pi^H\in {\mathscr A}_T$ 
that satisfies (3.4) is the hedging strategy with 
the minimal hedging cost. 
\end{itemize}
\end{prop}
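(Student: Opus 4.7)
The plan is to derive both assertions from the martingale-representation identities (3.3)--(3.4) together with Lemma 2.1, which identifies ${\mathcal F}^F_t$ with the $\tilde W$-filtration (augmented by ${\mathcal N}$). First I would set $M_t:=\tilde{\mathbb E}[e^{-rT}H\mid {\mathcal F}^F_t]$, which is a square-integrable $(\tilde{\mathbb P},{\mathcal F}^F_t)$-martingale because $H\in L^2(\tilde{\mathbb P},{\mathcal F}^F_T)$. It\^o's representation theorem then yields an ${\mathcal F}^F_t$-progressively measurable process $\phi$ with $\tilde{\mathbb E}\int_0^T \phi_t^2\,dt<\infty$ such that $M_t = M_0 + \int_0^t \phi_s\,d\tilde W_s$. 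Defining $\pi^H_t:= e^{rt}\phi_t/\sigma(t)$ and comparing with the formula for $\tilde G_t(\pi)$ produces (3.3), which in turn gives (3.4) with $x^H := M_0 = \tilde{\mathbb E}[e^{-rT}H]$.

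For (1), I would combine replication with a no-arbitrage argument. Define the candidate price $V_t := e^{rt} M_t = \tilde{\mathbb E}[e^{-r(T-t)}H\mid {\mathcal F}^F_t]$. Starting from capital $V_t$ at time $t$ and running $\pi^H$ on $[t,T]$ yields terminal wealth $H$ by restarting (3.4) at $t$, so $V_t$ replicates $H$. Conversely, since $\tilde{\mathbb P}\sim {\mathbb P}$ is an equivalent martingale measure for the discounted futures, any alternative market price $V'\neq V_t$ on an event of positive probability would give rise to an arbitrage via the standard buy-low/sell-high combination with $\pm \pi^H$ and a cash offset of $V'-V_t$; hence $V_t$ is the unique no-arbitrage price.

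For (2), I would exploit the $(\tilde{\mathbb P},{\mathcal F}^F_t)$-martingale property of $\tilde G_\cdot(\pi)$ for $\pi\in {\mathscr A}_T$. If $(x,\pi)\in {\mathbb R}\times {\mathscr A}_T$ is any admissible pair satisfying $e^{rT}x+G_T(\pi)\ge H$ $\tilde{\mathbb P}$-a.s., then $x+\tilde G_T(\pi)\ge e^{-rT}H$ $\tilde{\mathbb P}$-a.s., and taking $\tilde{\mathbb E}$ while using $\tilde{\mathbb E}[\tilde G_T(\pi)]=0$ yields $x\ge \tilde{\mathbb E}[e^{-rT}H]=x^H$. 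Since equality is attained by $(x^H,\pi^H)$ thanks to (3.4), this pair realises the minimal hedging cost.

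The only step that genuinely needs care, beyond a direct appeal to complete-market theory, is the verification that the integrand $\pi^H$ extracted from the It\^o representation actually lies in ${\mathscr A}_T$ as defined, i.e., satisfies $\tilde{\mathbb E}\int_0^T (\pi^H_t)^2\,dt<\infty$ rather than merely the weaker $L^2(d\langle \tilde W\rangle)$ condition that suffices for the stochastic integral to be defined. Here the standing hypothesis $\sigma(\cdot)\ge \epsilon>0$ is essential: the bound $(\pi^H_t)^2\le e^{2rT}\phi_t^2/\epsilon^2$ transfers the $L^2$ integrability of $\phi$ to $\pi^H$, thereby closing the loop.
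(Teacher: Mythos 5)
Your proof is correct and follows essentially the same route as the paper, which derives (3.3)--(3.4) from Lemma 2.1 and the Brownian martingale representation theorem and then simply invokes ``a standard argument of no-arbitrage pricing theory regarding complete markets'' without writing it out. You supply exactly that standard argument---replication of $H$ from $V_t$ plus the superreplication/martingale bound $x\ge \tilde{\mathbb E}[{\mathrm e}^{-rT}H]$ for minimality---and your verification that $\pi^H\in{\mathscr A}_T$ using $\sigma(\cdot)\ge\epsilon$ is a worthwhile detail the paper leaves implicit.
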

When we consider a derivative security with the payoff
\begin{equation}
 H:=h(F_T)\in L^2(\tilde{\mathbb P}, {\mathcal F}_T^F)
\end{equation}
at maturity $T$, we obtain the following Black--Scholes pricing 
formula.
\begin{cor}
For the derivative security (3.5) maturing at $T$, 
the following assertions are valid. 
\begin{itemize}
 \item[\rm (1)]
$\tilde{\mathbb E}[{\mathrm e}^{-r(T-t)}H|{\mathcal F}^F_t]
=V^H(t,F_t)$, where
\begin{align*}
 V^H(t,x):=& {\mathrm e}^{-r(T-t)}
\int_{-\infty}^\infty
h\left( x {\mathrm e}^{\Sigma(t,T)z -\frac{1}{2}\Sigma(t,T)^2}\right)
\frac{1}{\sqrt{2\pi}}{\mathrm e}^{-\frac{z^2}{2}}dz, \\
\Sigma(t,T):=& \sqrt{\int_t^T \sigma(s)^2 ds}.
\end{align*}
 \item[\rm (2)] 
The relation (3.3) holds with 
\begin{align*}
\tilde{\mathbb E}[{\mathrm e}^{-rT}H]=V^H(0,F_0), \quad
\pi^H_t=\partial_x V^H(t,F_t)F_t,
\quad t\in [0,T).
\end{align*}
\end{itemize}
\end{cor}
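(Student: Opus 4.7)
The plan is to specialize Proposition 3.1 to the Markovian payoff $H=h(F_T)$ by explicitly computing the conditional expectation under $\tilde{\mathbb P}$ using the log-normal law of $F_T$ given $\mathcal F^F_t$, and then to read off the hedging strategy by applying It\^o's formula to $V^H(t,F_t)$ and matching the diffusion coefficient against the representation (3.3).

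For part (1), I would start from the $\tilde{\mathbb P}$-dynamics (2.3) of $F$. Since $\tilde W$ is a $(\tilde{\mathbb P},\mathcal F^F_t)$-Brownian motion by Lemma 2.1, the solution gives
\[
F_T = F_t\exp\!\left(\int_t^T \sigma(s)\,d\tilde W_s - \tfrac12\int_t^T \sigma(s)^2\,ds\right),
\]
so, conditionally on $\mathcal F^F_t$, $\log(F_T/F_t)$ is Gaussian with mean $-\tfrac12\Sigma(t,T)^2$ and variance $\Sigma(t,T)^2$, independent of $\mathcal F^F_t$ apart from the factor $F_t$. Substituting $z=\Sigma(t,T)^{-1}\int_t^T\sigma(s)\,d\tilde W_s$ in the conditional expectation $\tilde{\mathbb E}[e^{-r(T-t)}h(F_T)\mid \mathcal F^F_t]$ and writing the density of the standard normal yields the claimed formula for $V^H(t,F_t)$. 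The required integrability of the integrand is inherited from $H\in L^2(\tilde{\mathbb P},\mathcal F_T^F)$.

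For part (2), I would first note that the Gaussian convolution in the definition of $V^H$ makes $V^H\in C^{1,2}([0,T)\times\mathbb R_{++})$ (differentiation under the integral sign is justified by the smoothing effect of the normal density against $h$, together with $h(F_T)\in L^2$). Applying It\^o's formula to $e^{-rt}V^H(t,F_t)$ under $\tilde{\mathbb P}$, and using that $e^{-rt}V^H(t,F_t)=\tilde{\mathbb E}[e^{-rT}H\mid\mathcal F^F_t]$ is a $(\tilde{\mathbb P},\mathcal F^F_t)$-martingale, forces the finite-variation part to vanish (i.e.\ $V^H$ solves the Black--Scholes PDE $\partial_t V^H + \tfrac12\sigma(t)^2 x^2\partial_{xx}V^H - rV^H=0$) and leaves
\[
e^{-rt}V^H(t,F_t) = V^H(0,F_0) + \int_0^t e^{-rs}\,\partial_x V^H(s,F_s)\,F_s\,\sigma(s)\,d\tilde W_s.
\]
Comparing this identity with (3.3) (in which $\pi^H$ is uniquely determined up to $dt\otimes d\tilde{\mathbb P}$-null sets by the martingale representation) gives $\pi^H_t=\partial_x V^H(t,F_t)F_t$ and $\tilde{\mathbb E}[e^{-rT}H]=V^H(0,F_0)$.

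The main technical point will be the admissibility $\pi^H\in\mathscr A_T$, i.e.\ $\tilde{\mathbb E}\int_0^T(\partial_x V^H(t,F_t)F_t)^2\sigma(t)^2\,dt<\infty$. This follows from It\^o's isometry applied to the stochastic integral above, since its final value $e^{-rT}H-V^H(0,F_0)$ lies in $L^2(\tilde{\mathbb P})$ by hypothesis; no direct estimate of $\partial_x V^H$ is needed. A minor subtlety is the behavior of $V^H$ as $t\uparrow T$, but (3.3) only requires $\pi^H$ on $[0,T]$ as an element of $\mathscr A_T$, and the $L^2$ bound just derived suffices.
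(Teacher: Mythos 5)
Your proof is correct, and it is essentially the argument the paper intends: the paper states Corollary 3.1 \emph{without} proof (its appendix collects proofs only of Lemma 2.1 and Propositions 3.2, 4.1, 5.1, and 6.2), treating it as the classical Black--Scholes consequence of Proposition 3.1 --- the lognormal law of $F_T$ given ${\mathcal F}^F_t$ under $\tilde{\mathbb P}$ for part (1), and the It\^o/martingale identification of the delta hedge for part (2) --- which is exactly what you carry out. One phrasing deserves care: It\^o's isometry cannot be ``applied'' to a stochastic integral before its integrand is known to lie in $L^2(dt\otimes d\tilde{\mathbb P})$; the clean statement is that your stochastic integral, call it $M_t$, coincides with the $L^2$-bounded martingale $\tilde{\mathbb E}[{\mathrm e}^{-rT}H\mid {\mathcal F}^F_t]-V^H(0,F_0)$, whence $\tilde{\mathbb E}\langle M\rangle_T \le \tilde{\mathbb E}[M_T^2]<\infty$ by localization and monotone convergence --- or, more simply, admissibility can be inherited from the $\pi^H\in{\mathscr A}_T$ already furnished by Proposition 3.1 together with uniqueness (up to $dt\otimes d\tilde{\mathbb P}$-null sets) of the integrand in the Brownian martingale representation.
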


\subsection{Indifference Pricing and Optimal Hedging}

We next consider a derivative security with the payoff 
\begin{equation}
\begin{split}
 &H=h(Y_T, \Theta), 
\quad\text{where} \\
&\text{$h(y,\theta):{\mathbb R}\times {\mathbb R}^2 \to {\mathbb R}$
is bounded and Borel-measurable}
\end{split}
\end{equation}
at maturity date $T\in {\mathbb R}_{++}$.
A typical example is the following.
\begin{ex}[European derivative on spot]
Consider 
\[
H= \tilde{h}(S_T),
\]
where $\tilde{h}:{\mathbb R}_{++}\to {\mathbb R}$
is bounded and Borel-measurable. 
Here, $S_T$ is the spot price at time $T$. By (2.2), we can write 
that
\[
 H= \tilde{h}\left( F_T {\mathrm e}^{-(r-\Theta_0)T}\right)
=h(Y_T, \Theta),
\]
where 
\[
 h(y,\theta):= \tilde{h}\left({\mathrm e}^{y -(r-\theta_0)T}\right).
\]
\end{ex}

Since the derivative security (3.6) is not replicable in general, 
that is, 
there does not exist a pair 
$(x^H,\pi^H)\in {\mathbb R}\times {\mathscr A}_T$, 
which satisfies the relation (3.4) in general, 
we cannot apply the no-arbitrage pricing theory in a complete market. 
Instead, we employ the exponential utility indifference price 
for the derivatives. 
Let 
\[
 U(x):= - {\mathrm e}^{-\gamma x}
\]
be the exponential utility function with risk-aversion parameter 
$\gamma>0$ and consider 
\[
V(x,H):= \sup_{\pi\in {\mathscr A}_T}
{\mathbb E} U\left(-H+{\mathrm e}^{rT}x + G_T(\pi) \right).
\]
Recall that we can write that
\begin{align}
 V(x,H)=& 
-\inf_{\pi\in {\mathscr A}_T}
{\mathbb E} \exp\left\{ 
-\gamma \left(-H+{\mathrm e}^{rT}x + G_T(\pi) \right)
\right\} \nonumber \\
=&-\exp\left( -\gamma {\mathrm e}^{rT}x\right)
\inf_{\pi\in {\mathscr A}_T}
{\mathbb E} \exp\left\{\gamma \left(H-G_T(\pi) \right) \right\}.
\end{align}
The indifference price $\hat{p}^H$ 
of the derivative $H$ at time $0$ is defined by the relation
\begin{equation}
V(x+\hat{p}^H,H)=V(x,0).
\end{equation}
Combining (3.8) with (3.7), we see that
\begin{equation}
\hat{p}^H
= \frac{{\mathrm e}^{-rT}}{\gamma}
\left\{
\inf_{\pi\in {\mathscr A}_T} \log 
{\mathbb E} \left[ 
{\mathrm e}^{\gamma \left(H-G_T(\pi)\right)}
\right]
-\inf_{\pi\in {\mathscr A}_T} \log 
{\mathbb E} \left[ 
{\mathrm e}^{-\gamma G_T(\pi)}
\right]
\right\}.
\end{equation}
We call the strategy $\hat{\pi}^H \in {\mathscr A}_T$ that satisfies
\begin{equation}
 V(x,H)= {\mathbb E}
U\left( -H+ {\mathrm e}^{rT}x + G_T(\hat{\pi}^H)\right)
\end{equation}
the optimal hedging strategy. 
We obtain the following.
\begin{prop}
For the derivative security (3.6), let
\begin{align}
 \hat{H}_T^{(\gamma)} :=& \frac{1}{\gamma} \log 
{\mathbb E}\left[ {\mathrm e}^{\gamma H} 
\bigm| {\mathcal F}^F_T\right], \\
 \tilde{H}_T^{(\gamma)}
:=& \frac{1}{\gamma}\log 
\tilde{\mathbb E}\left[
Z_T(\Theta)^{-1}{\mathrm e}^{\gamma H}
\bigm| {\mathcal F}^F_T \right].
\end{align}
Then, the following assertions are valid.
\begin{itemize}
 \item[\rm (1)]
The utility indifference price is equal to 
the no-arbitrage price of $\hat{H}_T^{(\gamma)}$:
\[
 \hat{p}^H= \tilde{\mathbb E}
\left[ {\mathrm e}^{-rT}\hat{H}_T^{(\gamma)} \right].
\]

 \item[\rm (2)]
The replicating strategy of $\tilde{H}_T^{(\gamma)}$, 
that is, the $\hat{\pi}^H\in {\mathscr A}_T$ that satisfies
\begin{equation}
\tilde{\mathbb E}\left[ {\mathrm e}^{-rT}\tilde{H}_T^{(\gamma)}
\bigm| {\mathcal F}^F_t\right]  
=
%{\mathrm e}^{rt}
\tilde{\mathbb E}\left[ {\mathrm e}^{-rT} \tilde{H}_T^{(\gamma)} \right]
+ \tilde{G}_t\left(\hat{\pi}^H\right),
\quad
t\in [0,T],
\end{equation}
is the optimal hedging strategy, which satisfies (3.10).
\end{itemize}
\end{prop}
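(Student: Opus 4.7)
The plan is to reduce each infimum in formula (3.9) for $\hat p^H$ to an explicit expectation by combining two ingredients: the Girsanov change of measure between $\mathbb{P}$ and $\tilde{\mathbb{P}}$ set up in Section 2, and conditioning on $\mathcal{F}^F_T$, which is available because for every admissible $\pi\in\mathcal{A}_T$ the cumulative gain $G_T(\pi)$ is $\mathcal{F}^F_T$-measurable.

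For the first infimum, write $d\mathbb{P}/d\tilde{\mathbb{P}}|_{\mathcal{F}_T}=Z_T(\Theta)^{-1}$ and apply the tower property to obtain
\[
\mathbb{E}[e^{\gamma(H-G_T(\pi))}]
=\tilde{\mathbb{E}}[Z_T(\Theta)^{-1}e^{\gamma H}e^{-\gamma G_T(\pi)}]
=\tilde{\mathbb{E}}[e^{\gamma(\tilde H_T^{(\gamma)}-G_T(\pi))}].
\]
Next, apply Lemma 2.1 and the Brownian martingale representation theorem under $\tilde{\mathbb{P}}$ to the $\mathcal{F}_T^F$-measurable random variable $e^{-rT}\tilde H_T^{(\gamma)}$ (square-integrability discussed below); this produces $\hat\pi^H\in\mathcal{A}_T$ satisfying (3.13). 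Setting $t=T$ in (3.13) and multiplying by $e^{rT}$ yields $\tilde H_T^{(\gamma)}=\tilde{\mathbb{E}}[\tilde H_T^{(\gamma)}]+G_T(\hat\pi^H)$, so
\[
\tilde{\mathbb{E}}[e^{\gamma(\tilde H_T^{(\gamma)}-G_T(\pi))}]
=e^{\gamma\tilde{\mathbb{E}}[\tilde H_T^{(\gamma)}]}\,\tilde{\mathbb{E}}[e^{\gamma(G_T(\hat\pi^H)-G_T(\pi))}].
\]
The exponent on the right is a $\tilde{\mathbb{P}}$-stochastic integral against $\tilde W$, hence has mean zero, and Jensen's inequality for the strictly convex map $x\mapsto e^{\gamma x}$ gives $\tilde{\mathbb{E}}[e^{\gamma(G_T(\hat\pi^H)-G_T(\pi))}]\ge 1$, with equality iff $\pi=\hat\pi^H$ in $L^2(dt\otimes d\tilde{\mathbb{P}})$ (using It\^o isometry and $\sigma\ge\epsilon>0$). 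Thus the first infimum equals $\gamma\tilde{\mathbb{E}}[\tilde H_T^{(\gamma)}]$ and is uniquely attained at $\hat\pi^H$.

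Rerunning the same argument with $H\equiv 0$ gives $\inf_\pi\log\mathbb{E}[e^{-\gamma G_T(\pi)}]=\gamma\tilde{\mathbb{E}}[\tilde 0_T^{(\gamma)}]$, where $\tilde 0_T^{(\gamma)}:=\gamma^{-1}\log\tilde{\mathbb{E}}[Z_T(\Theta)^{-1}\mid\mathcal{F}_T^F]$. Substituting both into (3.9) yields $\hat p^H=e^{-rT}\tilde{\mathbb{E}}[\tilde H_T^{(\gamma)}-\tilde 0_T^{(\gamma)}]$, and the conditional Bayes formula
\[
\mathbb{E}[e^{\gamma H}\mid\mathcal{F}_T^F]
=\frac{\tilde{\mathbb{E}}[Z_T(\Theta)^{-1}e^{\gamma H}\mid\mathcal{F}_T^F]}{\tilde{\mathbb{E}}[Z_T(\Theta)^{-1}\mid\mathcal{F}_T^F]}
\]
identifies the bracket as $\hat H_T^{(\gamma)}=\tilde H_T^{(\gamma)}-\tilde 0_T^{(\gamma)}$, proving (1). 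For (2), note that (3.7) gives $V(x,H)=-e^{-\gamma e^{rT}x}\inf_\pi\mathbb{E}[e^{\gamma(H-G_T(\pi))}]$, and the Jensen equality case already identifies $\hat\pi^H$ as the unique minimizer, which is exactly the content of (3.10). The main technical obstacle will be verifying the integrability conditions needed to apply the $L^2$ martingale representation to $e^{-rT}\tilde H_T^{(\gamma)}$ and to ensure $\hat\pi^H\in\mathcal{A}_T$: since $h$ is bounded but $Z_T(\Theta)^{-1}$ is only a $\tilde{\mathbb{P}}$-local martingale, one would control the relevant moments by combining the boundedness of $h$ with the linear-growth estimate on $\lambda(t,Y_t,\Theta)$ invoked right after the definition of $Z_t(\Theta)$ (Bensoussan 1990, Lemma 4.1.1).
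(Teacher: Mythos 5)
Your proposal is correct and follows essentially the same route as the paper's proof: change measure to $\tilde{\mathbb P}$ via $Z_T(\Theta)^{-1}$, condition on ${\mathcal F}^F_T$ to replace $H$ by $\tilde H_T^{(\gamma)}$, invoke the Brownian martingale representation theorem for $\hat\pi^H$ via (3.13), apply Jensen's inequality to identify the infimum as $\gamma\tilde{\mathbb E}[\tilde H_T^{(\gamma)}]$, and finish with the conditional Bayes rule to obtain $\hat p^H=\tilde{\mathbb E}[{\mathrm e}^{-rT}\hat H_T^{(\gamma)}]$. The only cosmetic differences are that you apply Jensen after substituting the representation of $\tilde H_T^{(\gamma)}$ (which additionally gives uniqueness of the minimizer), whereas the paper bounds $\log\tilde{\mathbb E}[{\mathrm e}^{\gamma(\tilde H_T^{(\gamma)}-G_T(\pi))}]$ below for all $\pi$ first and then verifies attainment at $\hat\pi^H$.
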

\begin{proof}
We can compare this with the results in Section 4 of Carmona and Ludkovski (2006) and also with Becherer (2003) and Henderson (2002) cited therein. For completeness, we give a direct proof in the appendix. 
\end{proof}

\begin{rem}
A related study, Mellios and Six (2011), 
should be mentioned. Mellios and Six (2011) 
employ an unobservable stochastic convenience yield model 
and studies an optimal hedging problem 
for a fixed commodity spot position  
using a commodity futures 
and a zero-coupon bond as the hedging instruments.
In contrast with our setting, 
the information flow for the hedger 
is generated by the commodity spot price process 
in the study.
\end{rem}

\section{A Conditionally Linear Filtering}

Hereafter, we assume that
\begin{equation}
\text{$f$ and $\sigma$ are constants}
\end{equation}
for simplicity.
In this section, we introduce filtering results 
for the hidden/unobservable random variable $\Theta$. 
For $g: {\mathbb R}^2 \to {\mathbb R}$, 
which is bounded and Borel measurable, let
\[
\widehat{g(\Theta)}_t
:={\mathbb E}\left[g(\Theta)| {\mathcal F}^F_t \right].
\]
By Bayes' rule, we see that
\begin{equation}
\widehat{g(\Theta)}_t
=\frac{\tilde{\mathbb E} \left[ Z_t(\Theta)^{-1} 
g(\Theta)\bigm| {\mathcal F}^F_t\right]}
{\tilde{\mathbb E}\left[ Z_t(\Theta)^{-1} \bigm| {\mathcal F}^F_t\right]}.
\end{equation}
Computing the right-hand side of (4.2), 
we obtain the following.
\begin{prop}
It holds that for $t\ge 0$,
\[
Z_t^{-1}(\Theta)
=\Lambda(\Theta; t,Y_t,P_t,Q_t),
\]
where we define 
\[
P_t:= \int_0^t Y_u du, \quad
Q_t:= \int_0^t Y_u^2 du,
\]
and
\begin{multline*}
\Lambda(\theta; t,y,p, q):=
\exp\left[
\frac{1}{\sigma^2}
(\theta_0+\alpha, -\theta_1) 
\begin{pmatrix}
y-\log F_0 +\frac{\sigma^2}{2}t \\
\frac{1}{2}\left( y^2-\log F_0^2 -\sigma^2 t +\sigma^2 p \right)
\end{pmatrix}
\right. \\
\left. 
-\frac{1}{2\sigma^2}
(\theta_0+\alpha, -\theta_1)
\begin{pmatrix}
t& p \\
p& q
\end{pmatrix}
\begin{pmatrix}
\theta_0+\alpha \\ -\theta_1
\end{pmatrix}
\right]
\end{multline*}
with $\alpha:=f+\sigma^2/2$.
So, from (4.2), it follows that
\[
\widehat{g(\Theta)}_t
=\int_{{\mathbb R}^2}g(\theta) \rho_t(d\theta),
\]
where we write the posterior probability as
\[
 \rho_t(d\theta):=
\frac{\Lambda(\theta; t,Y_t,P_t,Q_t)\nu(d\theta)}
{\int_{{\mathbb R}^2}\Lambda(\theta; t,Y_t,P_t,Q_t) \nu(d\theta)}
\]
and denote the prior distribution of $\Theta$ by $\nu$.
\end{prop}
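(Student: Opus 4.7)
The plan is two-fold: first, derive the pathwise representation $Z_t(\Theta)^{-1}=\Lambda(\Theta;t,Y_t,P_t,Q_t)$ by reducing the Girsanov exponent to a functional of the sufficient statistics $(Y_t,P_t,Q_t)$; second, feed this representation into the Bayes formula (4.2) and use Lemma 2.1(2) to extract the posterior density $\rho_t$.

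For the first step, set $\alpha:=f+\sigma^2/2$ so that $\mu(u,Y_u,\Theta)=(\Theta_0+\alpha)-\Theta_1 Y_u$ and $\lambda=\mu/\sigma$, and note
\[
\log Z_t(\Theta)^{-1}=\int_0^t \lambda\,dW_u+\tfrac{1}{2}\int_0^t \lambda^2\,du.
\]
I would substitute $\sigma\,dW_u=dY_u-(\alpha-\sigma^2/2+\Theta_0-\Theta_1 Y_u)\,du$ from (2.1) into the stochastic integral; this converts $\int_0^t \lambda\,dW_u$ into a Lebesgue--Stieltjes integral against $Y$ plus an ordinary Lebesgue integral. After adding $\tfrac{1}{2}\int_0^t \lambda^2\,du$ and collecting terms, one is left with
\[
\log Z_t(\Theta)^{-1}=\frac{1}{\sigma^2}\int_0^t (\Theta_0+\alpha-\Theta_1 Y_u)\,dY_u-\frac{1}{2\sigma^2}\int_0^t (\Theta_0+\alpha-\Theta_1 Y_u)^2\,du+\tfrac{1}{2}\int_0^t (\Theta_0+\alpha-\Theta_1 Y_u)\,du.
\]
Applying It\^o's formula to $Y_u^2$ to eliminate the remaining stochastic integral via $\int_0^t Y_u\,dY_u=\tfrac{1}{2}(Y_t^2-Y_0^2-\sigma^2 t)$, I would then expand each of the three integrals in $(Y_t,P_t,Q_t)$ with $Y_0=\log F_0$ and verify by direct algebraic comparison that the sum equals $\log\Lambda(\Theta;t,Y_t,P_t,Q_t)$: the first integral contributes the vector term $\frac{1}{\sigma^2}(\Theta_0+\alpha,-\Theta_1)$ acting on $(Y_t-\log F_0+\frac{\sigma^2}{2}t,\,\tfrac{1}{2}(Y_t^2-\log F_0^2-\sigma^2 t+\sigma^2 P_t))^\top$ once the third (linear) integral $\tfrac{1}{2}[(\Theta_0+\alpha)t-\Theta_1 P_t]$ is absorbed into it, while the second integral produces exactly the negative quadratic form in the $2\times 2$ matrix with entries $t,P_t,P_t,Q_t$.

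For the second step, I invoke Lemma 2.1(2): under $\tilde{\mathbb P}$, the random vector $\Theta$ is independent of $\tilde W$, hence of $\mathcal F_t^F$, and its law remains the prior $\nu$. Since $(Y_t,P_t,Q_t)$ is $\mathcal F_t^F$-measurable, both conditional expectations in (4.2) reduce to ordinary integrals against $\nu$: the numerator becomes $\int g(\theta)\Lambda(\theta;t,Y_t,P_t,Q_t)\,\nu(d\theta)$ and the denominator is the same integrand with $g\equiv 1$. The ratio is exactly $\int g(\theta)\,\rho_t(d\theta)$ with $\rho_t$ as stated.

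The only genuinely non-routine ingredient is the elimination of the It\^o integral in favour of a pathwise functional of $(Y_t,P_t,Q_t)$; the rest is careful algebraic bookkeeping. Without the observation that $\int_0^t Y_u\,dY_u$ reduces to $\tfrac{1}{2}(Y_t^2-Y_0^2-\sigma^2 t)$, $Z_t^{-1}$ would not be expressible as a function of only three sufficient statistics, and the conditionally linear filtering structure that the paper builds on would be lost.
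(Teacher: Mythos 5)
Your proposal is correct and takes essentially the same route as the paper: both reduce $\log Z_t(\Theta)^{-1}$ to $\frac{1}{\sigma^2}\int_0^t(\Theta_0+\alpha-\Theta_1 Y_u)\,dY_u+\frac{1}{2}\int_0^t(\Theta_0+\alpha-\Theta_1 Y_u)\,du-\frac{1}{2\sigma^2}\int_0^t(\Theta_0+\alpha-\Theta_1 Y_u)^2\,du$, eliminate the remaining stochastic integral via $\int_0^t Y_u\,dY_u=\tfrac{1}{2}(Y_t^2-Y_0^2-\sigma^2 t)$, and then combine Lemma 2.1 (independence of $\Theta$ and $\mathcal{F}^F_t$ under $\tilde{\mathbb P}$, with unchanged law $\nu$) with Bayes' rule (4.2). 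The only cosmetic difference is that the paper substitutes $d\tilde{W}_u=dY_u/\sigma+(\sigma/2)\,du$ from the $\tilde{\mathbb P}$-dynamics, whereas you substitute $dW_u$ directly from the physical dynamics (2.1); both land on the identical intermediate expression.
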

\begin{proof}
See Appendix.
\end{proof}

\begin{rem}
Proposition 4.1 can be interpreted as 
a specific example of a conditionally almost linear filtering result; 
this is explored in Haussmann and Pardoux (1988)
in a general setting.
\end{rem}
\begin{rem}
For $t\ge 0$,
we can write the Bayes estimator 
${\mathbb E}[\Theta_i| {\mathcal F}^F_t]$
of $\Theta_i$ ($i\in \{0,1\}$) as
\begin{align*}
\hat{\Theta}_i(t):=
 {\mathbb E}[\Theta_i| {\mathcal F}^F_t]
=&
\frac{\int_{{\mathbb R}^2} \theta_i 
\Lambda(\theta; t,Y_t,P_t,Q_t) \nu(d\theta)}
{\int_{{\mathbb R}^2} \Lambda(\theta; t,Y_t,P_t,Q_t) \nu(d\theta)} \\
=&:\bar{\Theta}_i \left( t, Y_t, P_t,Q_t \right).
\end{align*}
Then, we note that the filtered convenience yield
$(\hat{\Theta}_0(t))_{t \in [0,T]}$
is a stochastic {\it process}, 
which is ${\mathcal F}^F_t$-adapted.
Here, we may assume that the support of $\nu$ is bounded, 
for example. 
We can describe the
$({\mathbb P}, {\mathcal F}^F_t)$-dynamics of 
the process $Y$ as
\begin{multline*}
dY_t =\left\{
f+\bar{\Theta}_0 \left( t, Y_t, \int_0^t Y_udu, \int_0^t Y_u^2 du \right) 
\right. \\
\left. 
-\bar{\Theta}_1 \left( t, Y_t, \int_0^t Y_udu, \int_0^t Y_u^2 du \right)
 Y_t
\right\} dt 
+\sigma dB_t
\end{multline*}
on $(\Omega,{\mathcal F}, {\mathbb P}, ({\mathcal F}^F_t)_{t\in [0,T_1]})$.
Here, $(B_t)_{t\in [0,T_1]}$ is the 
$({\mathbb P}, {\mathcal F}^F_t)$-Brownian motion 
(the so-called innovation process) defined by
\[
B_t:= 
\frac{1}{\sigma}
\left[
Y_t-Y_0
- \int_0^t 
\left\{
f+\hat{\Theta}_0(s)
-\hat{\Theta}_1(s)Y_s
\right\} ds
\right].
\]
\end{rem}

\begin{rem}
From Proposition 4.1, 
the random variables $\hat{H}^{(\gamma)}_T$ and 
$\tilde{H}^{(\gamma)}_T$, given by (3.11) and (3.12), respectively, 
can be represented as
\begin{align*}
 \hat{H}^{(\gamma)}_T
=& 
\frac{1}{\gamma} 
\left[
\log \int_{{\mathbb R}^2} 
{\mathrm e}^{\gamma h(Y_T, \theta)} \Lambda(\theta, T,Y_T,P_T,Q_T)
\nu(d\theta) \right. \\
&\quad\left. 
-\log \int_{{\mathbb R}^2} \Lambda(\theta, T,Y_T,P_T,Q_T)\nu(d\theta)
\right] \\
=&: 
\hat{{\mathcal H}}^{(\gamma)}_T \left( Y_T, P_T, Q_T \right), \\
 \tilde{H}^{(\gamma)}_T=&
\frac{1}{\gamma} 
\log \int_{{\mathbb R}^2} 
{\mathrm e}^{\gamma h(Y_T, \theta)} 
\Lambda(\theta, T,Y_T,P_T,Q_T)\nu(d\theta) \\
=&: 
\tilde{{\mathcal H}}^{(\gamma)}_T \left( Y_T, P_T, Q_T \right).
\end{align*}
With the three-dimensional 
$(\tilde{\mathbb P}, {\mathcal F}^F_t)$-Markovian process
\[
Y_t := \log F_0-\frac{\sigma^2}{2}t + \sigma \tilde{W}_t,
\quad
P_t :=\int_0^t Y_u du,
\quad
Q_t := \int_0^t Y_u^2 du, 
\]
the indifference price is written as
\[
 \hat{p}^H
=\tilde{\mathbb E}\left[{\mathrm e}^{-rT}
\hat{{\mathcal H}}_T^{(\gamma)}(Y_T,P_T,Q_T)\right],
\]
and the optimal hedging strategy $\hat{\pi}^H\in {\mathscr A}_T$
is computed as 
\begin{equation}
 \hat{\pi}^H_t=
\partial_y \tilde{\mathcal V}^H(t,Y_t,P_t,Q_t),
\quad
t\in [0,T),
\end{equation}
where we let
\[
 \tilde{\mathcal V}^H(t,y,p,q):=
\tilde{\mathbb E}\left[
{\mathrm e}^{-r(T-t)}
\tilde{{\mathcal H}}^{(\gamma)}_T \left( Y_T, P_T, Q_T \right)
\bigm| Y_t=y, P_t=p, Q_t=q
\right]
\]
and assume that 
$\tilde{\mathcal V}^H(\cdot,\cdot,\cdot,\cdot)$
is smooth enough to apply It\^o's formula.
\end{rem}

\begin{ex}[Stochastic convenience yield and constant 
mean-reverting speed]
Suppose that 
\[
 \nu(d\theta_0,d\theta_1)
=\nu_0(d\theta_0)\otimes \delta_{\bar{\theta}_1}(d\theta_1),
\] 
where 
$\nu_0$ is the law of $\Theta_0$
and $\delta_{\bar{\theta}_1}$ 
with $\bar{\theta}_1\in {\mathbb R}$ is a Dirac measure.
That is, 
the convenience yield is a hidden stochastic random variable
with prior distribution $\nu_0$
and constant mean-reverting speed $\Theta_1\equiv \bar{\theta}_1$.
Then, the expression of the posterior probability 
is simplified. 
Indeed, we see that
\[
\Lambda(\theta_0,\bar{\theta}_1; t,y,p, q)=
\bar{\Lambda}(\theta_0; t,y,p)
\exp\left\{
-\frac{\bar{\theta}_1}{2\sigma^2}
\left( y^2-\log F_0^2 -\sigma^2 t +\sigma^2 p \right)
-\frac{\bar{\theta}_1^2}{2\sigma^2} q
\right\},
\]
where we define
\[
\bar{\Lambda}(\theta_0; t,y,p) 
:= \exp\left\{
\frac{(\theta_0+\alpha)}{\sigma^2}
\left( y +\bar{\theta}_1 p-\log F_0+\frac{\sigma^2}{2}t \right)
-\frac{(\theta_0+\alpha)^2}{2\sigma^2} t
\right\}.
\]
So, applying Proposition 4.1, we see that
\[
\rho_t(d\theta)
=
\frac{\bar{\Lambda}(\theta_0;t,Y_t,P_t)\nu_0(d\theta_0)}
{\int_{{\mathbb R}}\bar{\Lambda}(\theta_0;t,Y_t,P_t)\nu_0(d\theta_0)}
\otimes \delta_{\bar{\theta}_1}(d\theta_1),
\]
where the terms containing $Q_t$ have been canceled.
Similarly, the indifference price of the derivative 
with payoff $H:=h(Y_T,\theta_0)$ at maturity $T$ is 
simplified to
\[
 \hat{p}^H=\tilde{\mathbb E} \left[
{\mathrm e}^{-rT}
\hat{\mathcal H}_T^{(\gamma)}(Y_T,P_T)\right],
\]
where
\begin{multline*}
\hat{{\mathcal H}}^{(\gamma)}_T \left( Y_T, P_T\right)
:=\frac{1}{\gamma} 
\left[
\log \int_{{\mathbb R}} 
{\mathrm e}^{\gamma h(Y_T, \theta_0)} \bar{\Lambda}(\theta_0; T,Y_T,P_T)
\nu_0(d\theta_0) \right. \\
\left. -\log \int_{{\mathbb R}} \bar{\Lambda}(\theta_0; T,Y_T,P_T)
\nu_0(d\theta_0)
\right].
\end{multline*}
\end{ex}

\section{Trivariate Density}

In this section, we are interested in computing 
the trivariate density 
$\phi_t: {\mathbb R}^2 \times {\mathbb R}_{+} \to {\mathbb R}_+$
with
\[
\phi_t(y,p,q) dy dp dq
:= \tilde{\mathbb P}
\left( Y_t \in dy, P_t\in dp, Q_t\in dq\right).
\]
This density is useful 
for computing the indifference price and the optimal hedging 
strategy studied in Section 3. Indeed, from Remark 4.3, 
we have an integral representation
of the indifference price
\[
\hat{p}^H
={\mathrm e}^{-rT}\int_{{\mathbb R}^2 \times {\mathbb R}_+} 
\hat{{\mathcal H}}_T^{(\gamma)}(y,p,q)
\phi_T(y,p,q)\,dy\,dp\,dq
\]
and 
\begin{multline*}
 \tilde{\mathcal  V}^H(t,y,p,q) \\
={\mathrm e}^{-r(T-t)}
\int_{{\mathbb R}^2 \times {\mathbb R}_+} 
\tilde{{\mathcal H}}^{(\gamma)}_T (y+y_1,p+p_1,q+q_1)
\phi_{T-t}(y_1,p_1,q_1)dy_1\,dp_1\,dq_1,
\end{multline*}
using which the optimal hedging strategy $\hat{\pi}^H\in {\mathscr A}_T$
is represented as (4.3).
We obtain the following.
\begin{prop}
{\rm (1)}
For $(t,y,p,q)\in {\mathbb R}_+ \times {\mathbb R}
\times {\mathbb R}\times {\mathbb R}_+$, 
it holds that
\begin{multline*}
\phi_t(y,p,q)
=\frac{1}{\sigma^4}
\exp\left\{-\frac{1}{2}(y-\log F_0)-\frac{\sigma^2}{8}t \right\} \\
\times \psi_t\left( 
\frac{y-\log F_0}{\sigma},
\frac{p-(\log F_0) t}{\sigma},
\frac{q-2\sigma (\log F_0) p- (\log F_0)^2 t}{\sigma^2}
\right),
\end{multline*}
where we define
\[
\psi_t(x,y,z)\,dx\,dy\,dz
:=
 {\mathbb P}\left( W_t \in dx, 
\int_0^t W_s ds \in dy, \int_0^t W_s^2 ds \in dz\right).
\]

\medskip

\noindent{\rm (2)}
We write
\[
\psi_t(x,y,z)=\psi^{(1)}_t(x,y) \psi^{(2)}_t(z|x,y),
\]
where
\begin{align*}
\psi^{(1)}_t(x,y)\,dx\,dy
:=&{\mathbb P}\left( W_t \in dx, \int_0^t W_s ds\in dy\right), \\
\psi^{(2)}_t(z|x,y)dz
:=&{\mathbb P}\left( 
\int_0^t W_s^2 ds \in dz
\Bigm| 
W_t =x, \int_0^t W_s ds=y\right).
\end{align*}
Then, the following assertions are valid.
\begin{itemize}
 \item[\rm (i)] It holds that
\begin{equation}
\psi^{(1)}_t(x,y)
=\frac{1}{2\pi \sqrt{\det(A_1(t))}} 
\exp\left\{
-\frac{1}{2}(x,y)
A_1(t)^{-1}
%\begin{pmatrix}
%\frac{4}{t}& -\frac{6}{t^2} \\
%-\frac{6}{t^2}& \frac{12}{t^3}
%\end{pmatrix}
\begin{pmatrix}
x \\ y
\end{pmatrix}
\right\},
\end{equation}
where
\[
A_1(t):=
\begin{pmatrix}
t& \frac{t^2}{2} \\
\frac{t^2}{2}& \frac{t^3}{3}
\end{pmatrix}.
\]
 \item[\rm (ii)]
It holds that
\begin{align}
\Gamma_t (\alpha|x,y)
%:=&{\mathbb E}\left[ \exp\left(
%-\frac{\alpha^2}{2} \int_0^t W_u^2 du\right)
%\biggm| W_t=x, \int_0^t W_u du =y\right] 
%\nonumber \\
:=& \int_0^\infty 
{\mathrm e}^{-\frac{\alpha^2}{2} z}
\psi^{(2)}_t(z|x,y) dz 
\nonumber \\
=&\sqrt{\frac{\det( A_1(t) )}
{\det(A_2(t,\alpha))\cosh(\alpha t)}} 
\nonumber \\
&\times\exp\left[
-\frac{1}{2} 
(x,y)  
\left\{A_2(t,\alpha)^{-1} -A_1(t)^{-1}\right\}
\begin{pmatrix}
x \\ y
\end{pmatrix}
\right],
\end{align}
where we define
\begin{equation}
A_2(t,\alpha):=
\begin{pmatrix}
\frac{\tanh(\alpha t)}{\alpha}&
\frac{1-\mathrm{sech}(\alpha t)}{\alpha^2} \\
\frac{1-\mathrm{sech}(\alpha t)}{\alpha^2}&
\frac{\alpha t-\tanh(\alpha t)}{\alpha^3}
\end{pmatrix}
\quad \text{for $\alpha>0$}
\end{equation}
and set
\begin{equation}
A_2(t,0) :=\lim_{\alpha \downarrow 0} A_2(t,\alpha)=A_1(t). 
\end{equation}
\end{itemize}
\end{prop}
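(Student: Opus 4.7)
The plan is to remove the drift from the $\tilde{\mathbb{P}}$-dynamics of $Y$ by a Girsanov change of measure and then use a linear change of variables. I would define $\mathbb{Q}$ by $d\mathbb{Q}/d\tilde{\mathbb{P}}|_{\mathcal{F}_t}=\exp(\tfrac{\sigma}{2}\tilde{W}_t-\tfrac{\sigma^2}{8}t)$, so that $B_s:=\tilde{W}_s-\sigma s/2$ is a $\mathbb{Q}$-Brownian motion and $Y_s=\log F_0+\sigma B_s$ is driftless under $\mathbb{Q}$. Expressing $(P_t,Q_t)$ in terms of $(B_t,\int_0^t B_s\,ds,\int_0^t B_s^2\,ds)$ shows that the map $(Y_t,P_t,Q_t)\mapsto(B_t,\int_0^t B_s\,ds,\int_0^t B_s^2\,ds)$ is an affine diffeomorphism with constant Jacobian $1/\sigma^4$; pushing $\psi_t$ forward gives the $\mathbb{Q}$-law of $(Y_t,P_t,Q_t)$, and multiplication by $d\tilde{\mathbb{P}}/d\mathbb{Q}$ rewritten in terms of $Y_t$ yields the claimed prefactor $\exp\{-\tfrac{1}{2}(y-\log F_0)-\tfrac{\sigma^2}{8}t\}$.

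\textbf{Part (2)(i).} Since $(W_t,\int_0^t W_s\,ds)$ is a centered Gaussian vector, (5.1) reduces to a direct computation of its covariance matrix: $\mathrm{Var}(W_t)=t$, $\mathrm{Cov}(W_t,\int_0^t W_s\,ds)=\int_0^t s\,ds=t^2/2$, and $\mathrm{Var}(\int_0^t W_s\,ds)=\int_0^t\!\int_0^t(s\wedge r)\,ds\,dr=t^3/3$, which are exactly the entries of $A_1(t)$.

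\textbf{Part (2)(ii).} My plan is to compute the joint Laplace transform
\[
\varphi(\mu_1,\mu_2):=\mathbb{E}\left[\exp\Big(\mu_1 W_t+\mu_2\int_0^t W_s\,ds-\tfrac{\alpha^2}{2}\int_0^t W_s^2\,ds\Big)\right]
\]
by Feynman--Kac and then recover $\Gamma_t(\alpha|x,y)\psi^{(1)}_t(x,y)$ by Fourier inversion. Writing $\varphi=v(0,0)$ for $v(s,x)$ solving
\[
\partial_s v+\tfrac{1}{2}\partial_{xx}v+\mu_2 x\,v-\tfrac{\alpha^2}{2}x^2\,v=0,\qquad v(t,x)=e^{\mu_1 x},
\]
and imposing the Gaussian ansatz $v(s,x)=\exp(-A(s)x^2/2+B(s)x+C(s))$, the PDE decouples into
\[
A'=A^2-\alpha^2,\qquad B'=AB-\mu_2,\qquad C'=(A-B^2)/2,
\]
with terminal data $A(t)=0$, $B(t)=\mu_1$, $C(t)=0$. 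Reversing time gives $A(s)=\alpha\tanh(\alpha(t-s))$ and $B(s)=\mu_1\,\mathrm{sech}(\alpha(t-s))+\mu_2\tanh(\alpha(t-s))/\alpha$; a quadrature for $C(0)$, using $\int_0^t\alpha\tanh(\alpha u)\,du=\log\cosh(\alpha t)$ and the elementary integrals of $\mathrm{sech}^2$, $\mathrm{sech}\tanh$, and $\tanh^2$ (which combine into exactly the three entries of $A_2(t,\alpha)$), yields
\[
\varphi(\mu_1,\mu_2)=\frac{1}{\sqrt{\cosh(\alpha t)}}\exp\!\left(\tfrac{1}{2}(\mu_1,\mu_2)\,A_2(t,\alpha)\,(\mu_1,\mu_2)^\top\right).
\]
Because $A_2(t,\alpha)$ is the Gram matrix of $(\mathrm{sech}(\alpha\cdot),\tanh(\alpha\cdot)/\alpha)^\top$ on $[0,t]$, it is positive definite, so the right-hand side is (after $\mu\mapsto i\mu$) the characteristic function of $\frac{1}{\sqrt{\cosh(\alpha t)}}$ times the centered Gaussian density on $\mathbb{R}^2$ with covariance $A_2(t,\alpha)$. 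Identifying this density with $\Gamma_t(\alpha|x,y)\psi^{(1)}_t(x,y)$ and dividing by (5.1) produces (5.2); the limit (5.3) is immediate from the Taylor expansions of $\tanh$ and $\mathrm{sech}$.

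\textbf{Main obstacle.} The delicate step is the quadrature for $C(0)$: I must verify that the three integrals of $\mathrm{sech}^2$, $\mathrm{sech}\tanh$, and $\tanh^2$ reproduce exactly the entries of $A_2(t,\alpha)$ as given, and isolate the $-\tfrac{1}{2}\log\cosh(\alpha t)$ contribution from $\int_0^t A\,du$, which is the source of the $\cosh(\alpha t)$ factor inside the square root in (5.2). Once this bookkeeping is in hand, positive-definiteness of $A_2(t,\alpha)$ needed to justify the Fourier inversion follows automatically from its Gram-matrix form.
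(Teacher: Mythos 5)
Your parts (1) and (2)(i) coincide with the paper's own proofs: part (1) is exactly the paper's Girsanov-plus-affine-change-of-variables argument (your factor $d\tilde{\mathbb P}/d{\mathbb Q}$, rewritten via $B_t=(Y_t-\log F_0)/\sigma$, is precisely the paper's prefactor $\exp\{-\frac{1}{2}(y-\log F_0)-\frac{\sigma^2}{8}t\}$, and your Jacobian $1/\sigma^4$ is their $\frac{dy}{\sigma}\frac{dp}{\sigma}\frac{dq}{\sigma^2}$), while (2)(i) is the covariance computation the paper dismisses as easy. For (2)(ii), however, you take a genuinely different and correct route to the central lemma. Both arguments compute the same joint transform ${\mathbb E}\exp\{\mu_1 W_t+\mu_2\int_0^t W_s\,ds-\frac{\alpha^2}{2}\int_0^t W_s^2\,ds\}$ and then recover $\Gamma_t\psi^{(1)}_t$ by uniqueness of transforms (the paper reads the closed form off as the two-sided Laplace transform of a scaled Gaussian measure; you invert a characteristic function after $\mu\mapsto i\mu$ --- equivalent steps). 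The difference is how the transform is obtained: the paper's Lemma A.1 uses a Cameron--Martin--Maruyama--Girsanov change of measure built from the explicit exponential martingale with integrand $\alpha W_s-\beta_2/\alpha$, which converts the quadratic functional into a quadratic-exponential moment ${\mathbb E}[\exp\{\frac{\alpha}{2}X_t^2+\beta X_t\}]$ of a Gaussian OU variable $X_t$, finished with hyperbolic identities; you instead run Feynman--Kac with a log-quadratic ansatz and Riccati equations. Your ODEs, terminal data, and solutions $A(s)=\alpha\tanh(\alpha(t-s))$, $B(s)=\mu_1\,\mathrm{sech}(\alpha(t-s))+\mu_2\tanh(\alpha(t-s))/\alpha$ are right, and the quadrature does close: $\int_0^t\mathrm{sech}^2(\alpha u)\,du=\tanh(\alpha t)/\alpha$, $\int_0^t\mathrm{sech}(\alpha u)\tanh(\alpha u)\,du=(1-\mathrm{sech}(\alpha t))/\alpha$, and $\int_0^t\tanh^2(\alpha u)\,du=t-\tanh(\alpha t)/\alpha$ reproduce exactly the three entries of $A_2(t,\alpha)$, while $\int_0^t A(u)\,du=\log\cosh(\alpha t)$ yields the $1/\sqrt{\cosh(\alpha t)}$ prefactor. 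Your route buys two things the paper does not make explicit: it is the systematic method for quadratic functionals, and the Gram-matrix reading of $A_2(t,\alpha)$ (pairwise $L^2[0,t]$ inner products of $\mathrm{sech}(\alpha\cdot)$ and $\tanh(\alpha\cdot)/\alpha$) delivers its positive definiteness for free, which is what legitimizes interpreting the inverted transform as a Gaussian density. The paper's probabilistic route, in exchange, avoids the one technicality you leave implicit: Feynman--Kac with a quadratic potential and unbounded terminal datum ${\mathrm e}^{\mu_1 x}$ still requires verifying that $v(s,W_s)\exp\{\int_0^s(\mu_2 W_u-\frac{\alpha^2}{2}W_u^2)\,du\}$ is a true martingale rather than merely a local one (e.g.\ by dominating it by $C{\mathrm e}^{c\sup_{u\le t}|W_u|}$, which is integrable), whereas the Girsanov computation needs only the martingale property of an explicit exponential martingale; a complete write-up of your argument should include that verification.
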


\begin{proof}
See Appendix.
\end{proof}

\begin{rem}
Since we have not been able to find 
formula (5.2) in the existing literature, 
we introduce a proof of it in the appendix for the sake of completeness.
Formula (5.2) may be considered as an extension of
\[
{\mathbb E}
\left[ 
\exp\left( -\frac{\alpha^2}{2} \int_0^t W_u^2 du\right)
\biggm|
W_t=x
\right]
=\sqrt{\frac{\alpha t}{\sinh(\alpha t)}}
\exp\left[-\frac{x^2}{2t}
\left\{ \alpha t \coth (\alpha t)-1\right\}
\right],
\]
which is seen in (2.5) of Mansuy and Yor (2008), 
and 1.9.7 in p.168 of Borodin and Salminen (2002), 
for example.
\end{rem}

\begin{rem}
The explicit representation (5.2) of the 
conditional moment generating function is 
useful for (approximately) computing 
the conditional density $\psi^{(2)}(z| x,y)$:
we can apply 
Gram-Charlier expansion, 
Edgeworth expansion, 
or the saddle-point approximation
(see Hall (1992) and Jensen (1995), for example), 
at least formally.  
Or, we may define, in an appropriate way,  
the conditional Laplace transform
\[
\tilde{\Gamma}_t(\beta|x,y)
=\Gamma_t\left( (2\beta)^{\frac{1}{2}} \bigm|x,y\right)
\]
for $\beta\in {\mathbb C}$ 
to (numerically) compute the inverse Laplace transform:
\[
\psi^{(2)}_t(\cdot|x,y)
={\mathcal L}^{-1}
\left[
\tilde{\Gamma}_t(\cdot|x,y)
\right].
\]
\end{rem}

\section{Cumulants}

In this section, 
we observe a non-Gaussian nature 
in the logarithmic futures price $Y_t$ ($t\in [0,T_1]$)
under the physical probability measure
by analyzing its cumulants.
For simplicity, we assume that
\begin{equation}
\begin{split}
&\text{$(\Theta_0,\Theta_1)$ are bounded random variables, and} \\
&\text{$\Theta_1>\epsilon>0$ almost surely with some $\epsilon>0$.}
\end{split}
\end{equation}
The logarithmic futures price process,  
given by (2.1), is rewritten as
\begin{equation}
 dY_t =-\Theta_1\left( Y_t 
-\frac{\Theta_0+f}{\Theta_1} \right)dt 
+\sigma dW_t,
\quad
Y_0=\log F_0
\end{equation}
on $(\Omega, {\mathcal F}, {\mathbb P}, ({\mathcal F}_t)_{t\ge 0})$.
The conditional cumulant generating function
\[
K_{s,t}(\alpha)
= \log {\mathbb E}
\left[ {\mathrm e}^{\alpha Y_t} \bigm| {\mathcal F}^F_s\right],
\]
where $0\le s\le t\le T_1$,  is computed as
\begin{align*}
K_{s,t}(\alpha)
=& \log {\mathbb E} \left[ {\mathbb E}
\left[ {\mathrm e}^{\alpha Y_t} \bigm| {\mathcal F}_s\right]
| {\mathcal F}^F_s\right] \\
=&
{\mathbb E} 
\left[
\exp\left\{
\alpha m_{t-s}(\Theta)
+\frac{\alpha^2}{2} v_{t-s}(\Theta)
\right\}
\biggm| {\mathcal F}^F_s\right],
\end{align*}
where we use the Gaussian property 
of $Y_t$ under the conditional probability 
${\mathbb P}( \, \cdot \, | {\mathcal F}_s)$, and
\begin{align}
m_{t-s}(\Theta):=& {\mathbb E}[Y_t|{\mathcal F}_s]
={\mathrm e}^{-\Theta_1 (t-s)} Y_s
+\frac{\Theta_0+f}{\Theta_1}\left( 1-{\mathrm e}^{-\Theta_1 (t-s)}\right), \\
v_{t-s}(\Theta):=& {\mathbb V}[Y_t|{\mathcal F}_s]
=\frac{\sigma^2}{2\Theta_1}
\left( 1-{\mathrm e}^{-2\Theta_1 (t-s)}\right)
\end{align}
with the notation ${\mathbb V}[\,\cdot \, | \, \cdot\, ]$
for the conditional variance (under ${\mathbb P}$).
So, from Proposition 4.1, we see that
\[
K_{s,t}(\alpha)
=\int_{{\mathbb R}^2}
\exp\left\{
\alpha m_{t-s}(\theta)
+\frac{\alpha^2}{2} v_{t-s}(\theta)
\right\}
\rho_s(d\theta).
\]
%and use the independence of $W$ and $\Theta$ under $\mathbb P$.
Then, by setting $s=0$, 
the unconditional cumulant generating function is 
written as
\begin{equation}
K_{0,t}(\alpha)
=\int_{{\mathbb R}^2}
\exp\left\{
\alpha m_{t}(\theta)
+\frac{\alpha^2}{2} v_{t}(\theta)
\right\}
\nu(d\theta),
\end{equation}
where $\nu\equiv \rho_0$ is the prior distribution of $\Theta$. 
For the unconditional cumulants
\[
\kappa_n (t):=
\partial_\alpha^n K_{0,t}(\alpha)|_{\alpha=0},
\quad n\in {\mathbb N}, 
\]
we see the following.
\begin{prop}
It holds that, with (6.3) and (6.4), 
\begin{align*}
\kappa_1(t)=&{\mathbb E}[m_t(\Theta)], \\
\kappa_2(t)=&{\mathbb E}[v_t(\Theta)]+{\mathbb V}[m_t(\Theta)], \\
\kappa_3(t)=&
{\mathbb E}\left[ \left( m_t(\Theta)-\kappa_1(t)\right)^3\right]
+3{\mathbb C}[m_t(\Theta),v_t(\Theta)], 
\quad \text{and}\\
\kappa_4(t)=&
{\mathbb E}\left[ \left( m_t(\Theta)-\kappa_1(t)\right)^4\right]
+3\left\{
{\mathbb V}[v_t(\Theta)]-{\mathbb V}[m_t(\Theta)]^2
\right\} \\
&+6{\mathbb C}\left[ m_t(\Theta)^2, v_t(\Theta)\right]
-12{\mathbb E}[m_t(\Theta)]
{\mathbb C}\left[ m_t(\Theta), v_t(\Theta)\right],
\end{align*}
where 
${\mathbb V}[\cdot]$
denotes variance and 
${\mathbb C}[\cdot, \cdot]$
denotes covariance. 
\end{prop}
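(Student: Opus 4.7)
The plan is to exploit conditional Gaussianity. Solving the OU equation (6.2) with $\Theta$ held fixed shows that, conditional on $\sigma(\Theta)$, $Y_t$ is Gaussian with mean $m_t(\Theta)$ and variance $v_t(\Theta)$ given by (6.3)--(6.4); this is precisely what already underlies the derivation of $K_{s,t}(\alpha)$ just before the statement. Writing $\mu:=m_t(\Theta)$ and $V:=v_t(\Theta)$, I would use the conditional central moments $\mathbb{E}[(Y_t-\mu)^{2k+1}\mid\Theta]=0$, $\mathbb{E}[(Y_t-\mu)^2\mid\Theta]=V$, and $\mathbb{E}[(Y_t-\mu)^4\mid\Theta]=3V^2$, and then convert cumulants to central moments via the standard identities $\kappa_1=\mathbb{E}[Y_t]$, $\kappa_2=\mathbb{V}[Y_t]$, $\kappa_3=\mathbb{E}[(Y_t-\kappa_1)^3]$, and $\kappa_4=\mathbb{E}[(Y_t-\kappa_1)^4]-3\kappa_2^2$. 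Each $\kappa_n$ then reduces to an expectation in $\Theta$ alone.

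The first three cumulants should fall out routinely. By the tower property $\kappa_1=\mathbb{E}[\mu]$, and by the law of total variance $\kappa_2=\mathbb{E}[V]+\mathbb{V}[\mu]$. For $\kappa_3$, I expand $(Y_t-\kappa_1)^3=((Y_t-\mu)+(\mu-\kappa_1))^3$ by the binomial theorem and take the conditional expectation given $\Theta$; only the $(Y_t-\mu)^0$ and $(Y_t-\mu)^2$ terms survive, yielding
\[
\kappa_3=\mathbb{E}[(\mu-\kappa_1)^3]+3\mathbb{E}[(\mu-\kappa_1)V]=\mathbb{E}[(\mu-\kappa_1)^3]+3\mathbb{C}[\mu,V],
\]
where the last equality uses $\mathbb{E}[\mu]=\kappa_1$.

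The main obstacle is the bookkeeping for $\kappa_4$. The same binomial expansion, combined with $\mathbb{E}[(Y_t-\mu)^4\mid\Theta]=3V^2$, gives
\[
\mathbb{E}[(Y_t-\kappa_1)^4]=\mathbb{E}[(\mu-\kappa_1)^4]+6\mathbb{E}[(\mu-\kappa_1)^2V]+3\mathbb{E}[V^2].
\]
The nontrivial step is to expand $(\mu-\kappa_1)^2=\mu^2-2\kappa_1\mu+\kappa_1^2$ and use $\mathbb{E}[\mu^2]=\mathbb{V}[\mu]+\kappa_1^2$ to rewrite
\[
\mathbb{E}[(\mu-\kappa_1)^2V]=\mathbb{C}[\mu^2,V]-2\kappa_1\mathbb{C}[\mu,V]+\mathbb{V}[\mu]\mathbb{E}[V],
\]
then subtract $3\kappa_2^2=3\mathbb{E}[V]^2+6\mathbb{V}[\mu]\mathbb{E}[V]+3\mathbb{V}[\mu]^2$ and use $3\mathbb{E}[V^2]-3\mathbb{E}[V]^2=3\mathbb{V}[V]$. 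The cross-term $6\mathbb{V}[\mu]\mathbb{E}[V]$ cancels and I arrive at
\[
\kappa_4=\mathbb{E}[(\mu-\kappa_1)^4]+3\{\mathbb{V}[V]-\mathbb{V}[\mu]^2\}+6\mathbb{C}[\mu^2,V]-12\kappa_1\mathbb{C}[\mu,V],
\]
as claimed. Boundedness of $\Theta$ together with $\Theta_1>\epsilon$ in (6.1) ensures that $m_t(\Theta)$ and $v_t(\Theta)$ are bounded, so all expectations and covariances are finite and the differentiation of $K_{0,t}(\alpha)$ under the integral sign at $\alpha=0$ is legitimate.
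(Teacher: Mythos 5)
Your proof is correct: the binomial expansions, the identity $\mathbb{E}[(\mu-\kappa_1)^2V]=\mathbb{C}[\mu^2,V]-2\kappa_1\mathbb{C}[\mu,V]+\mathbb{V}[\mu]\mathbb{E}[V]$, and the final cancellation of the cross-term $6\mathbb{V}[\mu]\mathbb{E}[V]$ against $3\kappa_2^2$ all check out, and the boundedness assumption (6.1) does justify the moment manipulations. The paper's own proof is the single line ``Direct calculations from (6.5)'', i.e.\ differentiate the logarithm of the mixture moment generating function $\int\exp\{\alpha m_t(\theta)+\tfrac{\alpha^2}{2}v_t(\theta)\}\,\nu(d\theta)$ four times at $\alpha=0$ and rearrange. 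Your route replaces that analytic differentiation by the universal relations between the first four cumulants and central moments ($\kappa_3$ equals the third central moment, $\kappa_4$ equals the fourth central moment minus $3\kappa_2^2$), and then evaluates the central moments of $Y_t$ probabilistically: conditioning on $\Theta$, the Gaussian conditional central moments $0$, $V$, $0$, $3V^2$, the tower property, and the law of total variance. The two computations are algebraically equivalent --- the cumulant/central-moment identities you invoke are precisely what fourfold differentiation of the log produces --- so this is a difference of bookkeeping rather than of substance; still, your organization has two merits. First, the order-four algebra stays transparent, whereas expanding $\partial_\alpha^4\log M(\alpha)$ directly is where the mess lives. Second, it makes explicit something the paper leaves implicit: the cumulant identities you use require $K_{0,t}(\alpha)=\log\mathbb{E}[\mathrm{e}^{\alpha Y_t}]$, consistent with the definition of $K_{s,t}$ at the start of Section 6, whereas display (6.5) as printed has dropped the logarithm (a typo --- read literally, (6.5) would give $\kappa_2=\mathbb{E}[m_t(\Theta)^2+v_t(\Theta)]$ and the proposition would be false), so your proof in effect works from the correct definition rather than from the misprinted formula.
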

\begin{proof}
Direct calculations from (6.5).
\end{proof}

When we consider the long-time limit
$\kappa_n(\infty):=\lim_{t\to\infty}\kappa_n(t)$
of cumulants, 
the dependence 
of the prior distribution of $\Theta$ 
on the cumulants 
becomes simpler and clearer, as follows.
\begin{prop}
In addition to (6.1), 
assume that
the mean-reversion speed $\Theta_1$ and 
the mean-reversion level
\[
 \Theta_2:=\frac{\Theta_0+f}{\Theta_1}
\]
of (6.2) are independent.
It then holds that
\[
K_{0,\infty}(\alpha)
:=\lim_{t\to\infty} K_{0,t}(\alpha)
= K^{(1)} \left( \frac{\alpha^2\sigma^2}{4}\right)
+K^{(2)}(\alpha), 
\]
where we define
\[
\begin{split}
K^{(1)}(\alpha)
:=&\log {\mathbb E}\exp\left( \alpha \Theta_1^{-1} \right), \\
K^{(2)}(\alpha)
:=&\log {\mathbb E}\exp\left( \alpha \Theta_2 \right).
\end{split} 
\]
Further, it follows that, for $n\in {\mathbb N}$, 
\begin{align*}
\kappa_{2n-1}(\infty)
=&\kappa^{(2)}_{2n-1}, \\
\kappa_{2n}(\infty)
=&
%\frac{(2n)!}{n!}\left( \frac{\sigma^2}{4}\right)^n
(2n-1)!!\left( \frac{\sigma^2}{2}\right)^n
\kappa^{(1)}_n
+\kappa^{(2)}_{2n}
\end{align*}
with
\[
\kappa^{(i)}_{n}:=\partial^n_\alpha K^{(i)} (\alpha)\bigm|_{\alpha=0}
\quad i=1,2.
\]
\end{prop}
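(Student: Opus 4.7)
The plan is to pass to the long-time limit in formula (6.5), exploit the assumed independence of $\Theta_1$ and $\Theta_2$ to factor the resulting expectation, and then read off the cumulants by Taylor expansion of each factor.

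First I would take $t\to\infty$ inside the integral in (6.5). By (6.3)--(6.4), for any $\theta=(\theta_0,\theta_1)$ with $\theta_1>\epsilon$ we have
\[
 m_t(\theta)\longrightarrow \theta_2=\tfrac{\theta_0+f}{\theta_1},
\qquad
 v_t(\theta)\longrightarrow \tfrac{\sigma^2}{2\theta_1},
\]
and under (6.1) both $\theta_2$ and $1/\theta_1$ are bounded on the support of $\nu$. Hence the integrand $\exp\{\alpha m_t(\theta)+\tfrac{\alpha^2}{2}v_t(\theta)\}$ is dominated, uniformly in $t$, by a $\nu$-integrable constant, so dominated convergence yields
\[
 K_{0,\infty}(\alpha)=\log\int_{\mathbb R^2}
 \exp\!\Bigl\{\alpha\,\theta_2+\tfrac{\alpha^2\sigma^2}{4\theta_1}\Bigr\}\,\nu(d\theta).
\]

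Next I would use the assumed independence of $\Theta_1$ and $\Theta_2$ to factor the expectation as a product. Since $\alpha\Theta_2$ depends only on $\Theta_2$ and $\alpha^2\sigma^2/(4\Theta_1)$ depends only on $\Theta_1$, independence gives
\[
 \mathbb E\exp\!\Bigl\{\alpha\Theta_2+\tfrac{\alpha^2\sigma^2}{4\Theta_1}\Bigr\}
 =\mathbb E\bigl[e^{\alpha\Theta_2}\bigr]\cdot\mathbb E\bigl[e^{\alpha^2\sigma^2/(4\Theta_1)}\bigr],
\]
and taking logs produces exactly $K^{(2)}(\alpha)+K^{(1)}(\alpha^2\sigma^2/4)$, establishing the first assertion.

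For the cumulant identities I would differentiate the decomposition
\[
 K_{0,\infty}(\alpha)=K^{(1)}\!\bigl(\tfrac{\alpha^2\sigma^2}{4}\bigr)+K^{(2)}(\alpha)
\]
at $\alpha=0$. Expanding each term in Taylor series about the origin yields
\[
 K^{(2)}(\alpha)=\sum_{k\ge 1}\kappa^{(2)}_k\frac{\alpha^k}{k!},
 \qquad
 K^{(1)}\!\bigl(\tfrac{\alpha^2\sigma^2}{4}\bigr)
 =\sum_{n\ge 1}\kappa^{(1)}_n\frac{\sigma^{2n}}{4^n\, n!}\,\alpha^{2n},
\]
so that only the second sum contains $\alpha^{2n}$ contributions beyond those in $K^{(2)}$. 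Reading off coefficients immediately gives $\kappa_{2n-1}(\infty)=\kappa^{(2)}_{2n-1}$, while the even-index formula reduces to the combinatorial identity
\[
 \frac{(2n)!}{4^n\,n!}=\frac{(2n)!}{2^n\,n!\cdot 2^n}=\frac{(2n-1)!!}{2^n},
\]
which follows from $(2n)!=2^n n!\,(2n-1)!!$, yielding $\kappa_{2n}(\infty)=(2n-1)!!\,(\sigma^2/2)^n\kappa^{(1)}_n+\kappa^{(2)}_{2n}$.

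The only genuinely delicate point is the interchange of limit and integration in the first step, and this is painless thanks to the uniform bounds that (6.1) delivers; the remaining work is bookkeeping with Taylor coefficients and the double-factorial identity, both of which are routine.
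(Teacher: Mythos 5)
Your computation of $K_{0,\infty}$ and the Taylor-coefficient bookkeeping both match the paper, but there is a genuine gap in the cumulant part: what you have computed is $\partial_\alpha^n K_{0,\infty}(\alpha)\bigm|_{\alpha=0}$, whereas the proposition is about $\kappa_n(\infty):=\lim_{t\to\infty}\kappa_n(t)=\lim_{t\to\infty}\partial_\alpha^n K_{0,t}(\alpha)\bigm|_{\alpha=0}$, as defined just before the statement. Reading the cumulants off the Taylor expansion of the limit function silently interchanges the limit in $t$ with the differentiation in $\alpha$, and pointwise (even uniform) convergence $K_{0,t}\to K_{0,\infty}$ does not by itself imply convergence of derivatives at the origin: the functions $\alpha\mapsto t^{-1}\sin(t\alpha)$ converge uniformly to $0$ as $t\to\infty$, yet their derivative at $0$ equals $1$ for every $t$. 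So the step ``reading off coefficients immediately gives $\kappa_{2n-1}(\infty)=\kappa^{(2)}_{2n-1}$'' is not justified as written; contrary to your closing remark, this --- not the limit/integral interchange in the first step --- is the delicate point of the proof.

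The paper closes exactly this gap: since $K_{0,t}(\alpha)=\log{\mathbb E}\exp\bigl\{\alpha m_t(\Theta)+\tfrac{\alpha^2}{2}v_t(\Theta)\bigr\}$, each $\kappa_n(t)$ is a finite sum of products of expectations of polynomials in $m_t(\Theta)$ and $v_t(\Theta)$ (Proposition 6.1 exhibits this for $n\le 4$; the general case is the usual moment--cumulant relation). Under (6.1) these random variables are bounded uniformly in $t$, so the dominated convergence theorem applies to each such expectation, giving $\lim_{t\to\infty}\kappa_n(t)=\partial_\alpha^n K_{0,\infty}(\alpha)\bigm|_{\alpha=0}$; only after this identification may one read the answer off the Taylor expansion as you do. Alternatively, you could observe that under (6.1) the functions $\alpha\mapsto K_{0,t}(\alpha)$ extend analytically to a fixed complex neighbourhood of the origin with bounds uniform in $t$, so that convergence of derivatives follows from Vitali's theorem --- but some argument of this kind must be supplied. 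Everything else in your proposal --- the dominated-convergence passage to $K_{0,\infty}$, the factorization of the expectation via the assumed independence of $\Theta_1$ and $\Theta_2$, and the identity $(2n)!=2^n n!\,(2n-1)!!$ --- is correct and coincides with the paper's own route.
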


\begin{proof}
See Appendix.
\end{proof}
%In the following, we plot 
%some examples of probability density functions 
%of the prior distribution of $\Theta$ 
%and the marginal law of $Y_t$. 
%We assume that
%$\Theta_1$
%and $\Theta_2:=\frac{\Theta_0+f}{\Theta_1}$
%are independent. $Y_t$ is represented as
%\[
% dY_t =
%-\Theta_1(Y_t - \Theta_2) dt
%+ \sigma dW_t,
%\quad
% Y_0=\log F_0.
%\]
%In the first example, 
%we set $\sigma=2.0$ 
%and $\Theta_1 \sim \Gamma(2.0, 0.5)$, 
%where $\Gamma (k,l)$ is the Gamma distribution 
%with the shape parameter $k$ and the scale parameter $l$.
%For $\Theta_2$, we consider the three distributions, 
%that is, 
%$LN(2.64, 0.14)$, 
%$N(14.1, 2.0)$, 
%${\rm Weibull}(8.4, 15.0)$.
%Here, 
%$LN(a,b)$ is the log-normal distribution 
%with mean $a$ and variance $b$, 
%and  ${\rm Weibull}(a,b)$
%is the Weibull distribution 
%with the shape parameter $a$ and the scale parameter $b$.
%\begin{center}
%\begin{figure}[h]
%\includegraphics[scale=0.35,angle=0]{3types_pdf_1.eps}
%\hspace{1cm}
%\includegraphics[scale=0.35,angle=0]{lnorm_norm_weibull_021801.eps}
%\end{figure}
%\end{center}

\section{Conclusion}

In this paper, 
we have studied a one-factor commodity pricing model
under partial information.
Concretely, the state variable of the model is an Ornstein--Uhlenbeck process
in which the mean-reverting level and the mean-reverting speed parameters 
are modeled as hidden, unobservable random variables.
Using the model, we have provided 
no-arbitrage pricing formulas
for derivative securities written on a liquid 
commodity futures 
and the exponential utility indifference pricing formulas
for derivative securities written on an illiquid commodity spot.
Also, we have introduced a related conditionally linear filtering
result, which is useful 
for computing the pricing/hedging formulas
and Bayesian estimators of the hidden variables. 

Studying a multifactor generalization
would be an interesting and important future research topic
related with this work.
Indeed, multifactor modeling is more natural and suitable
for describing a rich term structure of futures/forwards.
We refer interested readers to the following as a starting point: 
Gibson and Schwartz (1990), Schwartz (1997), 
Yamauchi (2002),  
Akahori, Yasutomi and Yokota (2005), 
Casassus and Collin-Dufresne (2005), 
Carmona and Ludkowski (2006), Mellios and Six (2011), 
Shiraya and Takahashi (2012), 
and the references therein. 
A straightforward generalization of our model is the
following: 
let $Y:=(Y_t)_{t\ge 0}$ be the $n$-dimensional {\it observable} 
state variable governed by
\[
dY_t = 
\left\{ f(t) + \Theta_0 -\Theta_1 Y_t\right\} dt
+\sigma(t) dW_t,
\quad
Y_0 \in {\mathbb R}^n
\]
on $(\Omega, {\mathcal F}, {\mathbb P}, ({\mathcal F}_t)_{t\ge 0})$,
endowed with the 
$n$-dimensional Brownian motion $W:=(W_t)_{t\ge 0}$, 
and the $n$-dimensional $\Theta_0$ 
and $n\times n$-dimensional $\Theta_1$, 
both of which are random variables independent of $W$.
Here, ${\mathbb P}$ is regarded as the physical probability measure, 
\[
 {\mathcal F}_t:=\sigma( W_u; u\le t)
\vee \sigma(\Theta_0,\Theta_1),
\]
and both $f: {\mathbb R}_+ \to {\mathbb R}^n$
and $\sigma: {\mathbb R}_+ \to {\mathbb R}^{n\times n}$
are deterministic functions. 
Using this state variable, the $m$ futures price processes 
$F^i:=(F^i_t)_{t\in [0,T_i]}$
($i=1,\cdots,m$, $m\le n$, with $T_i$ the delivery date), 
are given by
\[
 F^i_t:={\mathrm e}^{Y^i_t},
\quad i=1,\cdots, m.
\]
We employ the filtration
\[
 {\mathcal F}^Y_t:=\sigma(Y_u; u\in [0,t])
\quad t\ge 0
\]
as the information flow of an agent 
and regard $(\Theta_0,\Theta_1)$ 
as a {\it hidden/unobservable} variable
that is estimated in by Bayesian methods.
We note that the models fitting this multifactor formulation 
with the conditionally linear Gaussian state variable $Y$
include the models employed in 
Gibson and Schwartz (1990), Schwartz (1997), 
Yamauchi (2002), Casassus and Collin-Dufresne (2005), and
Shiraya and Takahashi (2012)
by setting $(\Theta_0,\Theta_1)$ as constants 
and choosing linear Gaussian state variables.

\appendix

\section{Collected Proofs}

\subsection{Proof of Lemma 2.1}

{\rm (1)} 
The assertion follows from the relations
\begin{align*}
Y_t =&Y_0
-\int_0^t \frac{\sigma(u)^2}{2}du
+\int_0^t \sigma(u) d\tilde{W}_u,  \\
 \tilde{W}_t =& \int_0^t \frac{dY_u}{\sigma(u)}
-\int_0^t \frac{\sigma(u)}{2}du.
\end{align*}

\noindent{\rm (2)}
Since $\tilde{W}$ is a $(\tilde{\mathbb P}, {\mathcal F}_t)$-Brownian
motion, 
we see that, for any 
$0=t_0<t_1<\cdots<t_n=T_1$, 
the increments
$\tilde{W}_{t_{i}}-\tilde{W}_{t_{i-1}}$ 
($i=1,\ldots,n$) are independent of 
${\mathcal F}_0=\sigma(\Theta)\vee {\mathcal N}$.
Hence, $\tilde{W}$ and $\Theta$ are independent. 
Also, we see that
\[
 \tilde{\mathbb E}[h(\Theta)]
= {\mathbb E} [Z_0(\Theta)h(\Theta)]
={\mathbb E}[h(\Theta)]
\]
for any bounded measurable function $h$.

\subsection{Proof of Proposition 3.2}

For $\pi\in {\mathscr A}_T$, we see that
\begin{align}
\log {\mathbb E}
\left[ {\mathrm e}^{\gamma \{ H-G_T(\pi)\}} \right]
=&\log \tilde{\mathbb E} 
\left[ Z_T(\Theta)^{-1}{\mathrm e}^{\gamma \{ H-G_T(\pi)\}} \right]
\nonumber \\
=&\log \tilde{\mathbb E} 
\left[ 
\tilde{\mathbb E}\left[
Z_T(\Theta)^{-1}{\mathrm e}^{\gamma H}
\bigm| {\mathcal F}^F_T \right]
{\mathrm e}^{-\gamma G_T(\pi)} \right] 
\nonumber \\
=&\log \tilde{\mathbb E} \left[
{\mathrm e}^{\gamma \{ \tilde{H}^{(\gamma)}_T -G_T(\pi)\}}\right] 
\nonumber \\
\ge& \gamma \tilde{\mathbb E}
\left[ \tilde{H}^{(\gamma)}_T -G_T(\pi) \right]
= \gamma \tilde{\mathbb E}\left[ \tilde{H}^{(\gamma)}_T \right],
\end{align}
where we use Jensen's inequality to derive the inequality 
in (A.1).
By the Brownian martingale representation theorem, 
we see that there exists $\hat{\pi}^H\in {\mathscr A}_T$
that satisfies (3.13).
The strategy satisfies
\begin{equation}
\log \tilde{\mathbb E} \left[
{\mathrm e}^{\gamma \left( \tilde{H}^{(\gamma)}_T 
-G_T(\hat{\pi}^H)\right)}\right]
= \gamma \tilde{\mathbb E}\left[ \tilde{H}^{(\gamma)}_T \right].
\end{equation}
Combining (A.1) and (A.2), we have that
\begin{equation}
\inf_{\pi\in {\mathscr A}_T}
{\mathbb E} \left[ {\mathrm e}^{\gamma(H-G_T({\pi}))} \right]
={\mathbb E} \left[ {\mathrm e}^{\gamma(H-G_T(\hat{\pi}^H))} \right]
={\mathrm e}^{\gamma \tilde{\mathbb E}[\tilde{H}^{(\gamma)}_T]}.
\end{equation}
From (3.9) and (A.3), we deduce that
\begin{align*}
 \hat{p}^H
=&\frac{{\mathrm e}^{-rT}}{\gamma}
\tilde{\mathbb E}
\left[
\log 
\tilde{\mathbb E}
\left[ Z_T(\Theta)^{-1} {\mathrm e}^{\gamma H}
\bigm| {\mathcal F}^F_T\right]
-\log 
\tilde{\mathbb E}
\left[ Z_T(\Theta)^{-1}
\bigm| {\mathcal F}^F_T\right]
\right] \\
=& \tilde{\mathbb E}\left[ 
{\mathrm e}^{-rT} \hat{H}_T^{(\gamma)}
\right],
\end{align*}
where we use Bayes' rule.

\subsection{Proof of Proposition 4.1}

Recalling that
\[
dY_t = \sigma d\tilde{W}_t-\frac{\sigma^2}{2} dt,
\quad
d\tilde{W}_t =\frac{dY_t}{\sigma}+\frac{\sigma}{2}dt 
\]
and that
\[
Y_t dY_t = \frac{1}{2} 
\left( dY_t^2 -\sigma^2 dt \right),
\]
we see that, letting 
$\alpha:=f+\sigma^2/2$, 
\begin{align*}
&\log Z_t(\Theta)^{-1} \\
=& \frac{1}{\sigma}\int_0^t
\left( \alpha+\Theta_0 -\Theta_1 Y_u\right) d\tilde{W}_u
-\frac{1}{2\sigma^2}
\int_0^t \left| \alpha+\Theta_0 -\Theta_1 Y_u\right|^2 du \\
=& \frac{1}{\sigma^2}\int_0^t
\left( \alpha+\Theta_0 -\Theta_1 Y_u\right)
\left( dY_u +\frac{\sigma^2}{2}du\right) 
-\frac{1}{2\sigma^2}
 \int_0^t
\left| \alpha+\Theta_0 -\Theta_1 Y_u\right|^2 du \\
=& \log \Lambda(\Theta;t, Y_t, P_t, Q_t).
\end{align*}
From this, using Lemma 2.1, we obtain the expression for 
the posterior probability $\rho_t(d\theta)$. 

\subsection{Proof of Proposition 5.1}

\noindent{(1)} Recall that
\[
 Y_t =\log F_0 -\frac{\sigma^2}{2}t +\sigma \tilde{W}_t.
\]
So, writing $y_0=\log F_0$, we deduce that
\begin{align*}
&\phi_t(y,p,q) dydpdq \\
=&\tilde{\mathbb P}
\left( Y_t \in dy, 
\int_0^t Y_s ds \in dp, 
\int_0^t Y_s^2 ds \in dq\right) \\
=&{\mathbb E}\left[
\exp\left(-\frac{\sigma}{2}W_t-\frac{\sigma^2}{8}t \right)
1_{\left\{ 
y_0 +\sigma W_t \in dy, 
\int_0^t (y_0+\sigma W_s) ds \in dp, 
\int_0^t (y_0+\sigma W_s)^2 ds \in dq 
\right\}}
\right] \\
=&\exp\left\{-\frac{1}{2}(y-y_0)-\frac{\sigma^2}{8}t \right\}  \\
&\times {\mathbb P}
\left( 
y_0 +\sigma W_t \in dy, 
\int_0^t (y_0+\sigma W_s) ds \in dp, 
\int_0^t (y_0+\sigma W_s)^2 ds \in dq 
\right) \\
=&
\exp\left\{-\frac{1}{2}(y-y_0)-\frac{\sigma^2}{8}t \right\} \\
&\times \psi_t\left( 
\frac{y-y_0}{\sigma},
\frac{p-y_0 t}{\sigma},
\frac{q-2\sigma y_0 p- y_0^2 t}{\sigma^2}
\right)
\frac{dy}{\sigma}\frac{dp}{\sigma}\frac{dq}{\sigma^2},
\end{align*}
where we use the Cameron-Martin-Maruyama-Girsanov formula.

\medskip

\noindent{(2)} 
The truth of assertion (i) is easy to see. 
To show assertion (ii) is true, we use a lemma.
\begin{lem}
For $t, \alpha \ge 0$ and $\beta_1,\beta_2\in {\mathbb R}$,
it holds that
\begin{align*}
L_t(\alpha,\beta_1,\beta_2)
:=& {\mathbb E}
\exp\left( \beta_1 W_t + \beta_2 \int_0^t W_s ds 
-\frac{\alpha^2}{2} \int_0^t W_s^2 ds \right) \\
=&
\frac{1}{\sqrt{\cosh(\alpha t)}}
\exp\left\{
\frac{1}{2}
(\beta_1,\beta_2) A_2(t,\alpha)
\begin{pmatrix}
\beta_1 \\ \beta_2
\end{pmatrix}
\right\},
\end{align*}
where we use (5.3)-(5.4).
\end{lem}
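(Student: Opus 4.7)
The plan is to use a Feynman--Kac argument to reduce the lemma to a Gaussian (quadratic-exponential) ansatz. First, I would regard the expectation as a function of both the time horizon and the starting point of the Brownian motion: define
\[
u(\tau, x) := \mathbb{E}\Bigl[\exp\Bigl(\beta_1 W^x_\tau + \beta_2 \int_0^\tau W^x_r\,dr - \tfrac{\alpha^2}{2}\int_0^\tau (W^x_r)^2\,dr\Bigr)\Bigr],
\]
where $W^x$ denotes Brownian motion with $W^x_0 = x$, so that $L_t(\alpha,\beta_1,\beta_2) = u(t,0)$. A standard argument (Markov property plus It\^o expansion over an infinitesimal time increment) shows that $u$ satisfies the parabolic PDE
\[
\partial_\tau u = \tfrac{1}{2}\partial_x^2 u + \bigl(\beta_2 x - \tfrac{\alpha^2}{2}x^2\bigr) u, \qquad u(0,x) = e^{\beta_1 x}.
\]

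Next, since the coefficients are quadratic in $x$ and the initial datum is exponential-linear, I would seek a solution of the form $u(\tau,x) = \exp\bigl(A(\tau) x^2 + B(\tau) x + C(\tau)\bigr)$. Substituting into the PDE and matching the coefficients of $x^2$, $x^1$, and $x^0$ reduces it to the three ODEs
\[
A' = 2A^2 - \tfrac{\alpha^2}{2}, \qquad B' = 2AB + \beta_2, \qquad C' = \tfrac{1}{2}B^2 + A,
\]
with initial conditions $A(0)=0$, $B(0)=\beta_1$, $C(0)=0$. The Riccati equation for $A$ is linearised by the substitution $A = -\phi'/(2\phi)$, which turns it into $\phi'' = \alpha^2 \phi$; the condition $A(0)=0$ selects $\phi(\tau) = \cosh(\alpha\tau)$, giving $A(\tau) = -\tfrac{\alpha}{2}\tanh(\alpha\tau)$. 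The first-order linear equation for $B$ is then solved with integrating factor $\cosh(\alpha\tau)$, producing
\[
B(\tau) = \beta_1\,\mathrm{sech}(\alpha\tau) + \tfrac{\beta_2}{\alpha}\tanh(\alpha\tau).
\]

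Finally, since $u(t,0) = e^{C(t)}$, the whole computation reduces to evaluating $C(t) = \int_0^t \bigl(\tfrac{1}{2} B(s)^2 + A(s)\bigr)\,ds$. The $A$-term contributes $-\tfrac{1}{2}\log\cosh(\alpha t)$, which produces exactly the prefactor $1/\sqrt{\cosh(\alpha t)}$. The quadratic $\tfrac{1}{2}\int_0^t B(s)^2\,ds$ breaks into the three elementary integrals of $\mathrm{sech}^2$, $\mathrm{sech}\cdot\tanh$, and $\tanh^2$, whose antiderivatives evaluated on $[0,t]$ are precisely $\tfrac{\tanh(\alpha t)}{\alpha}$, $\tfrac{1-\mathrm{sech}(\alpha t)}{\alpha^2}$, and $\tfrac{\alpha t - \tanh(\alpha t)}{\alpha^3}$, matching the three distinct entries of $A_2(t,\alpha)$ in (5.3). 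The main subtlety is the Feynman--Kac justification itself: because the potential $-\tfrac{\alpha^2}{2}x^2$ has quadratic growth, standard existence/uniqueness theorems do not apply off the shelf. I would handle this \emph{a posteriori} by applying It\^o's formula to the candidate closed-form expression multiplied by the exponential of the path functional and verifying that the resulting process is a true (not merely local) martingale on $[0,t]$, using the Gaussian integrability afforded by $A(\tau) \le 0$ throughout.
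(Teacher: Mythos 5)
Your proposal is correct, but it takes a genuinely different route from the paper. The paper's proof is purely probabilistic: it uses the identity $\int_0^t W_s\,dW_s=\tfrac12(W_t^2-t)$ to factor the integrand into a Girsanov density times a terminal-value functional, changes measure so that $W$ becomes an Ornstein--Uhlenbeck process $X$ with explicit mean $m_t$ and variance $v_t$, computes the one-dimensional Gaussian integral ${\mathbb E}\exp\{\tfrac{\alpha}{2}X_t^2+\beta X_t\}$ in closed form, and then reassembles the answer into the entries of $A_2(t,\alpha)$ via hyperbolic identities such as $1-\alpha v_t={\mathrm e}^{-\alpha t}\cosh(\alpha t)$. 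Your Feynman--Kac/Riccati route instead reads off the matrix entries directly: the Riccati solution $A(\tau)=-\tfrac{\alpha}{2}\tanh(\alpha\tau)$ yields the prefactor $1/\sqrt{\cosh(\alpha t)}$, and $\tfrac12\int_0^t B(s)^2\,ds$ decomposes into the $\mathrm{sech}^2$, $\mathrm{sech}\cdot\tanh$, $\tanh^2$ integrals whose values (after including the powers of $\alpha$ carried by the coefficients of $B^2$ --- your phrasing slightly conflates these) are exactly the entries of $A_2(t,\alpha)$; this is arguably more systematic and generalizes mechanically to higher-dimensional or time-dependent versions. The trade-off is rigor: the paper's measure change is a standard Girsanov for a linear-drift SDE (the same justification it invokes via Bensoussan elsewhere), whereas Feynman--Kac with an unbounded quadratic potential needs the a posteriori verification you correctly identify --- your argument that $A(\tau)\le 0$ keeps the candidate solution at most exponentially growing in $x$, so that the local martingale obtained from It\^o's formula has a square-integrable integrand and is a true martingale, closes that gap. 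Two small points: both your proof and the paper's divide by $\alpha$, so the case $\alpha=0$ should be noted separately (the paper disposes of it via (5.1) and (5.4); in your setup it is the elementary Gaussian computation for the jointly Gaussian pair $(W_t,\int_0^t W_s\,ds)$ with covariance $A_1(t)$, or a continuity-in-$\alpha$ remark), and you should record that $\phi'(0)=0$, not a normalization of $\phi(0)$, is what forces $\phi(\tau)=\cosh(\alpha\tau)$ up to an irrelevant constant.
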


\begin{proof}
If $\alpha=0$ then the assertion is seen to be true from (5.1) and (5.4).

Next, suppose $\alpha>0$. 
We see that
\begin{align*}
L_t(\alpha, \beta_1,\beta_2)
=&{\mathbb E}
\exp\left\{
-\int_0^t \left(\alpha W_s -\frac{\beta_2}{\alpha} \right) dW_s
-\frac{1}{2}\int_0^t \left(\alpha W_s -\frac{\beta_2}{\alpha} \right)^2 ds
\right\} \\
&\times\exp\left\{
\frac{\alpha}{2} (W_t^2-t)
+\left( \beta_1 -\frac{\beta_2}{\alpha}\right)W_t
+\frac{1}{2}\left( \frac{\beta_2}{\alpha}\right)^2 t 
\right\} \\
=&\exp\left[
\frac{1}{2}
\left\{
\left( \frac{\beta_2}{\alpha}\right)^2 
-\alpha
\right\}t
\right]
{\mathbb E}
\left[
\exp\left\{
\frac{\alpha}{2} X_t^2
+\left( \beta_1 -\frac{\beta_2}{\alpha}\right)X_t
\right\}
\right],
\end{align*}
where 
we use the Cameron-Martin-Maruyama-Girsanov formula
and set
\[
 dX_t= dW_t 
-\left( \alpha X_t -\frac{\beta_2}{\alpha}\right) dt,
\quad
X_0=0.
\]
Setting
\begin{align*}
m_t:=&{\mathbb E}X_t
= \frac{\beta_2}{\alpha^2}(1-{\mathrm e}^{-\alpha t}), \\
v_t:=&{\mathbb E}(X_t-m_t)^2
=\frac{1}{2\alpha}(1-{\mathrm e}^{-2\alpha t}),
\end{align*}
we have that
\begin{align*}
{\mathbb E}
\left[
\exp\left(
\frac{\alpha}{2} X_t^2 +\beta X_t
\right)
\right]  
%=& \int_{-\infty}^\infty
%\frac{1}{\sqrt{2\pi v_t}} 
%\exp\left\{
%\frac{\alpha}{2} x^2 + \beta x 
%-\frac{(x-m_t)^2}{2v_t}
%\right\}dx \\
%=&
%\frac{1}{\sqrt{2\pi v_t}} 
% \int_{-\infty}^\infty
%\exp\left[
%-\frac{1}{2}\left( \frac{1-\alpha v_t}{v_t}\right)
%\left\{ 
%x-\frac{m_t+\beta v_t}{1-\alpha v_t}
%\right\}^2
%+\frac{(m_t+\beta v_t)^2}{2 v_t (1-\alpha v_t)}
%-\frac{m_t^2}{2v_t}
%\right]dx \\
%=&
%\frac{1}{\sqrt{1-\alpha v_t}}
%\exp\left\{
%\frac{(m_t+\beta v_t)^2}{2 v_t (1-\alpha v_t)}
%-\frac{m_t^2}{2v_t}
%\right\} \\
=\frac{1}{\sqrt{1-\alpha v_t}}
\exp\left\{
\frac{\beta^2 v_t + 2\beta m_t +\alpha m_t^2}
{2(1-\alpha v_t)}
\right\}.
\end{align*}
Hence,
\begin{align*}
&\log {\mathbb E}
\left[
\exp\left\{
\frac{\alpha}{2} X_t^2
+\left( \beta_1 -\frac{\beta_2}{\alpha}\right)X_t
\right\}
\right]  \\
=&-\frac{1}{2} \log (1-\alpha v_t)
+\frac{1}{2(1-\alpha v_t)}
\left\{
\left( \beta_1 -\frac{\beta_2}{\alpha}\right)^2 v_t
+ 2\left( \beta_1 -\frac{\beta_2}{\alpha}\right)m_t
+ \alpha m_t^2
\right\} \\
=&-\frac{1}{2} \log (1-\alpha v_t) \\
&+\frac{1}{2(1-\alpha v_t)}
\left\{
\left( \beta_1 -\frac{\beta_2}{\alpha}\right)^2 v_t
+ \frac{2\beta_2}{\alpha^2}
\left( \beta_1 -\frac{\beta_2}{\alpha}\right)(1-{\mathrm e}^{-\alpha t})
+ \frac{\beta_2^2}{\alpha^3}
(1-{\mathrm e}^{-\alpha t})^2
\right\} \\
%=&-\frac{1}{2} \log (1-\alpha v_t) \\
%&+\frac{1}{2(1-\alpha v_t)}
%\left[
%v_t \beta_1^2
%+\left\{
%\frac{v_t}{\alpha^2} 
%+\frac{
%(1-{\mathrm e}^{-\alpha t})^2
%-2(1-{\mathrm e}^{-\alpha t})}{\alpha^3}
%\right\} \beta_2^2 \right. \\
%& \left.
%+\left\{
%-\frac{2v_t}{\alpha}
%+\frac{2(1-{\mathrm e}^{-\alpha t})}{\alpha^2}
%\right\}
%\beta_1 \beta_2
%\right] \\
=&-\frac{1}{2} \log (1-\alpha v_t) 
+\frac{1}{2(1-\alpha v_t)}
\left[
v_t \beta_1^2
-\frac{v_t}{\alpha^2} \beta_2^2
+\frac{(1-{\mathrm e}^{-\alpha t})^2}{\alpha^2} \beta_1 \beta_2
\right].
\end{align*}
So, 
\begin{align*}
\log L_t(\alpha,\beta_1,\beta_2)
=&
-\frac{\alpha}{2}t
-\frac{1}{2} \log (1-\alpha v_t)  
+\frac{v_t}{2(1-\alpha v_t)} \beta_1^2 \\
&+\frac{1}{2\alpha^2}
\left( 
t-\frac{v_t}{1-\alpha v_t}
\right) \beta_2^2
+\frac{(1-{\mathrm e}^{-\alpha t})^2}{2\alpha^2(1-\alpha v_t)} 
\beta_1 \beta_2.
\end{align*}
Recalling that
\begin{align*}
1-\alpha v_t
=&{\mathrm e}^{-\alpha t} \cosh( \alpha t), \\
\frac{v_t}{1-\alpha v_t}
=&\frac{1}{\alpha} \tanh (\alpha t), \\
\frac{(1-{\mathrm e}^{-\alpha t})^2}{(1-\alpha v_t)} 
=&2 \left( 1-\mathrm{sech} (\alpha t)\right),
\end{align*}
we deduce that
\begin{align*}
\log L_t(\alpha,\beta_1,\beta_2)
=&-\frac{1}{2} \log \cosh(\alpha t)
+\frac{1}{2\alpha} \tanh (\alpha t) \beta_1^2 \\
&+\frac{1}{2\alpha^3}
\left\{ \alpha t - \tanh(\alpha t) \right\} \beta_2^2
+\frac{1}{\alpha^2} 
\left\{ 1-\mathrm{sech} (\alpha t)\right\}\beta_1\beta_2,
%=&-\frac{1}{2} \log \cosh(\alpha t)
%+\frac{1}{2}
%(\beta_1,\beta_2) V_t(\alpha)
%\begin{pmatrix}
%\beta_1 \\ \beta_2
%\end{pmatrix},
\end{align*}
and this completes the proof.
\end{proof}

We are now in a position to show that assertion (ii) is true.
We deduce that
\begin{align*}
&L_t(\alpha,\beta_1,\beta_2) \\
=& \frac{1}{2\pi}
\frac{1}{\sqrt{{\det(A_2(t,\alpha))\cosh(\alpha t)}}} \\
\times& \int_{{\mathbb R}^2} 
\exp\left\{
(\beta_1,\beta_2)
\begin{pmatrix}
x \\ y
\end{pmatrix}
-\frac{1}{2} 
(x,y) A_2(t,\alpha)^{-1} 
\begin{pmatrix}
x \\ y
\end{pmatrix}
\right\} dx\,dy \\
=&\sqrt{\frac{\det(A_1(t))}{\det(A_2(t,\alpha))\cosh(\alpha t)}} \\
\times &\int_{{\mathbb R}^2} 
\exp\left[
(\beta_1,\beta_2)
\begin{pmatrix}
x \\ y
\end{pmatrix}
-\frac{1}{2} 
(x,y) 
\left\{
A_2(t,\alpha)^{-1} 
-A_1(t)^{-1}\right\}
\begin{pmatrix}
x \\ y
\end{pmatrix}
\right] \psi^{(1)}_t(x,y)\,dx\,dy \\
=&
\int_{{\mathbb R}^2}
{\mathrm e}^{\beta_1 x +\beta_2 y}
\Gamma_t(\alpha|x,y) \psi^{(1)}_t(x,y)\,dx\,dy.
\end{align*}

\subsection{Proof of Proposition 6.2}

We see that
\begin{align*}
m_\infty(\Theta):=& \lim_{t\to\infty}m_t(\Theta)=\Theta_2, \\
v_\infty(\Theta):=& \lim_{t\to\infty}v_t(\Theta)=
\frac{\sigma^2}{2}\Theta_1^{-1}.
\end{align*}
So, using the dominated convergence theorem, we deduce that
\begin{align*}
 K_{0,\infty}(\alpha)
=&\log {\mathbb E}\exp\left\{ \alpha m_\infty(\Theta)
+\frac{\alpha^2}{2} v_\infty(\Theta)\right\} \nonumber \\
=& K^{(1)} \left( \frac{\alpha^2\sigma^2}{4}\right)
+K^{(2)}(\alpha).
\end{align*}
Further, we deduce that
\begin{align*}
\partial_\alpha^n K_{0,\infty}(\alpha) \bigm|_{\alpha=0}
=
\begin{cases}
\kappa^{(2)}_{2m-1}&
\text{if $n=2m-1$}, \\
(2m-1)!!\left( \frac{\sigma^2}{2}\right)^m
\kappa^{(1)}_m+\kappa^{(2)}_{2m}&
\text{if $n=2m$}, 
\end{cases}
\end{align*}
and that
$\kappa_n(t):=\partial^n_\alpha K_{0,t}(\alpha) \bigm|_{\alpha=0}$
is represented as a 
sum of products of the expected values of 
some polynomials of $m_t(\Theta)$ and $v_t(\Theta)$.
Therefore, 
we can apply the dominated convergence theorem to see that
\[
\kappa_n(\infty)
=\lim_{t\to\infty} 
\partial^{n}_\alpha K_{0,t}(\alpha)\bigm|_{\alpha=0}
=\partial^{n}_\alpha K_{0,\infty}(\alpha)\bigm|_{\alpha=0}.
\]

\section*{Acknowledgments}

Jun Sekine's research was supported by a Grant-in-Aid 
for Scientific Research (C), No. 23540133, 
from the Ministry of Education, Culture, 
Sports, Science, and Technology, Japan.
This article is the preprint version of the article 
``A One-Factor Conditionally Linear Commodity Pricing Model under Partial Information'' 
published in {\it Asia-Pacific Financial Markets}, 
DOI: 10.1007/s10690-014-9182-y. 
The final publication is available at link.springer.com: \\
\url{http://link.springer.com/article/10.1007/s10690-014-9182-y}

\end{document}